\theoremstyle{plain}
\newtheorem{theorem}{Theorem}
\newtheorem{lemma}{Lemma}
\newtheorem*{proof}{Proof}
\theoremstyle{definition}
\newtheorem{definition}{Definition}
\newtheorem{example}{Example}
\begin{document}
%
\title{Order-preserving pattern mining with forgetting mechanism}
%
%
%

\author{Yan Li, Chenyu Ma, Rong Gao, Youxi Wu, \IEEEmembership{Senior Member, IEEE}, Jinyan Li, Wenjian Wang, and Xindong Wu, \IEEEmembership{Fellow,~IEEE}
\thanks{Manuscript received December 8, 2023.  (Corresponding authors: Youxi Wu and Jinyan Li). }
\thanks{Yan Li, Chenyu Ma, and Rong Gao are with the School of Economics and Management, Hebei University of Technology, Tianjin, 300400, China (e-mail: lywuc@163.com; mcy\_machenyu@163.com; rgao@hebut.edu.cn).}

\thanks{Youxi Wu is with the School of Artificial Intelligence, Hebei University of Technology, Tianjin, 300400, China (e-mail: wuc567@163.com).}


\thanks{Jinyan Li is with the School of Computer Science and Control Engineering, Shenzhen University of Advanced Technology, Shenzhen 518055, China, 
and Shenzhen Institute of Advanced Technology, Chinese Academy of Sciences, Shenzhen 518055, China  (e-mail: Jinyan.li@siat.ac.cn).}

\thanks{Wenjian Wang is with the School of Computer and Information Technology, Shanxi University, Taiyuan, 237016, China (e-mail: wjwang@sxu.edu.cn).}

\thanks{Xindong Wu is with the Key Laboratory of Knowledge Engineering with Big Data (the Ministry of Education of China), Hefei University of Technology, Hefei 230009, China (e-mail: xwu@hfut.edu.cn).}
}

%
%

\markboth{IEEE Transactions on Knowledge and Data Engineering }
{Shell \MakeLowercase{\textit{Ma et al.}}: Bare Demo of IEEEtran.cls for IEEE Journals}
%





\IEEEtitleabstractindextext{
\begin{abstract}
Order-preserving pattern (OPP) mining is a type of sequential pattern mining method in which a group of ranks of time series is used to represent an OPP. This approach can discover frequent trends in time series. Existing OPP mining algorithms consider data points at different time to be equally important; however, newer data usually have a more significant impact, while older data have a weaker impact. We therefore introduce the forgetting mechanism into OPP mining to reduce the importance of older data. This paper explores the mining of OPPs with forgetting mechanism (OPF) and proposes an algorithm called OPF-Miner that can discover frequent OPFs. OPF-Miner performs two tasks, candidate pattern generation and support calculation. In candidate pattern generation, OPF-Miner employs a maximal support priority strategy and a group pattern fusion strategy to avoid redundant pattern fusions. For support calculation, we propose an algorithm called support calculation with forgetting mechanism, which uses prefix and suffix pattern pruning strategies to avoid redundant support calculations. The experiments are conducted on nine datasets and 12 alternative algorithms. The results verify that OPF-Miner is superior to other competitive algorithms. More importantly, OPF-Miner yields good clustering performance for time series, since the forgetting mechanism is employed.

\end{abstract}

\begin{IEEEkeywords}
pattern mining, time series, forgetting mechanism, order-preserving pattern
\end{IEEEkeywords}
}
\maketitle

\IEEEdisplaynontitleabstractindextext
\IEEEpeerreviewmaketitle

\IEEEpeerreviewmaketitle

\section{Introduction}

 
\IEEEPARstart{A}{} time series is an observation sequence measured over a continuous period of time \cite {tsdatamining, multivariate,jessica2020}, and is important in many fields of application, such as the analysis of stock data \cite {6tkdd2022NTP,stock},  environment forecasting \cite {environmentforecasting,environment1},  and traffic-flow prediction \cite {8tkdd2022neg}. Many methods have been employed for the analysis of time series data, such as generative adversarial networks \cite {GAN1,GAN2}, long short-term memory neural networks \cite {LSTM}, matrix profiles \cite {matrixprofile}, and sequential pattern mining \cite {nosep}.

Sequential pattern mining is an important data mining method \cite {20tkdd2019gan,lijinyan2017, ernsp} that can be used to discover interesting patterns in time series with good interpretability \cite {fnsp, bigdata}, and is attracting attention from researchers \cite {window}. Previous studies have been based on the idea that a time series needs to be discretized into symbols before sequential pattern mining methods can be used for time series mining \cite {31pyramidpattern,61li2022apind}. However, existing discretization methods, such as piecewise aggregate approximation \cite {Lin2002Finding} and symbolic aggregate approximation \cite {Keogh2005HOT}, have an excessive focus on the values of the time series data, making it difficult to use existing sequential pattern mining methods to mine frequent trends from such data.

Inspired by order-preserving pattern (OPP) matching \cite {kimopp,orderpreserving2} in the field of theoretical computer science, an OPP mining approach  \cite {OPPminer} was proposed in which the relative order of real values was used to represent a pattern that could be used to characterize the trend of time series, and an algorithm called OPP-Miner was developed to mine frequent trends from time series. To further improve the efficiency of OPP-Miner, an efficient OPP mining algorithm EFO-Miner \cite {OPRminer} was developed to mine OPPs. Moreover, an order-preserving rule (OPR) mining method was devised to discover the implicit relationships between OPPs \cite {OPRminer}. 

There are two main issues with existing OPP mining methods:

1) They adopt pattern fusion methods to generate candidate patterns, which involve some redundant pattern fusions. Hence, the efficiency of OPP-Miner  \cite {OPPminer} and EFO-Miner  \cite {OPRminer} can be improved. 

2) More importantly, they consider data points at different time to have equal importance. However, newer data typically have a significant impact, while the impact of older data is relatively weak. Hence, in OPP mining, it is necessary to assign different levels of importance to data at different time, which is similar to the forgetting mechanism of the human brain \cite {forgetting1}.  A forgetting curve is shown in Fig. \ref{Example of a forgetting curve}.
\begin{figure}
    \centering
    \includegraphics[width=0.95\linewidth]{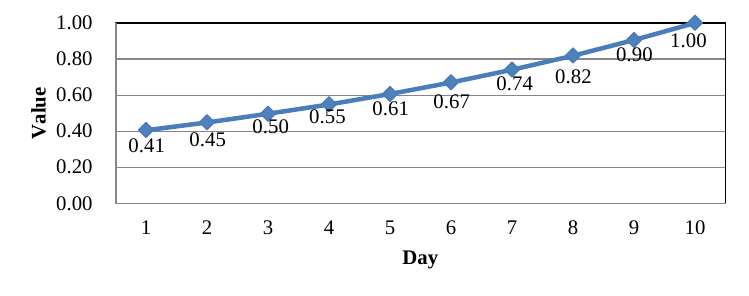}
    \caption{Example of a forgetting curve}
    \label{Example of a forgetting curve}
\end{figure}

Inspired by the forgetting mechanism \cite {forgetting2}, we introduce it into sequential pattern mining to reduce the importance of old data. In this study, we explore the mining of OPPs with the forgetting mechanism (OPFs) and propose an effective algorithm named OPF-Miner to discover frequent OPFs. The main contributions of the paper are as follows.
\begin{enumerate}[1.]

\item To avoid the excessive influence of older data in a time series, we propose an OPF mining scheme that assigns different levels of importance to data at different time in the series, and develop an algorithm called OPF-Miner to mine OPFs effectively.

\item  To generate candidate patterns, we propose a maximal support priority strategy to improve the efficiency in a heuristic manner and a group pattern fusion (GP-Fusion) strategy to reduce redundant pattern fusions.

\item  To calculate the support, we propose an algorithm called support calculation with forgetting mechanism (SCF) in which prefix and suffix pattern pruning strategies are applied to further reduce redundant support calculations. 

\item  Our experimental results verify that OPF-Miner has better performance than other competitive algorithms. More importantly, clustering experiments validate that OPF-Miner can be used to realize feature extraction with good performance.
\end{enumerate}

The rest of this paper is organized as follows. Section \ref{section:Related work}  describes related work. Section \ref{section:Problem definitions}  presents a definition of the problem. Section \ref{section:Proposed algorithm} introduces the OPF-Miner algorithm, and in Section \ref{section:Experimental Results and Analysis}, its performance is validated. Section \ref{section:CONCLUSION} concludes this paper.

\section{Related work}
\label{section:Related work}

Sequential pattern mining (SPM) is a very important data mining method \cite {patternmining}, since it is easy to understand and can yield results with good interpretability. SPM is typically used to discover subsequences (called patterns) with high frequency, but there are also certain special forms, such as high-utility pattern mining for the discovery of low-frequency but important patterns \cite {unilyunutility, 10tkdd2022utility, tmis2022, 47RUP-Miner}, rare pattern mining for low-frequency patterns \cite {rare2016tkdd}, co-occurrence pattern mining for mining patterns with common prefixes \cite {paacoguo,mcor}, outlier pattern mining for abnormal cases \cite {38tkdd2020}, and two-way and three-way SPM for mining patterns with different levels of interests \cite {OWSP2021tmis, 5ins2020tri}. To reduce the number of patterns mined in this process, approaches such as closed SPM \cite {closed1,closed2, 39closed2020kbs}, maximal SPM \cite {maximalspm,max2022apin}, and top-$k$ SPM \cite {cooccpattern,2022constrast} also have been developed. Various mining methods have been designed to meet the needs of users, such as negative sequential pattern mining for mining patterns with missing items \cite {18_Wang2021,negativespm, nonoccurring}, SPM with gap constraints (repetitive SPM) to discover repetitively occurring patterns \cite {tkdd2012, 34biological,tongoverlapping, RNPminer}, episode mining with timestamps \cite{episodepattern, episoderule}, process pattern mining for event logs \cite {liucong1,liucong2}, incremental sequential pattern mining for uncertain databases \cite {incrementalmining}, and OPP mining for time series \cite {OPPminer, OPRminer}.

To analyze time series data, traditional methods first employ a symbolization method to discretize the original real values into symbols, and then apply an SPM method to mine the patterns. However, traditional methods focus too much on the values of time series data and overlooks the trend information of the time series. To tackle this issue, OPP mining \cite {OPPminer} based on OPP matching \cite {fastorder,filtrationorder} was proposed, in which a group of relative orders is used to represent a pattern and the number of occurrences of a pattern in a time series is calculated. OPP-Miner was the first algorithm for OPP mining to employ a pattern matching method and filtration and verification strategies to calculate the support. To improve the efficiency of support calculation, EFO-Miner \cite {OPRminer} utilized the results for the sub-patterns to calculate the occurrences of super-patterns. To further reveal the implicit relationships between OPPs, OPR mining was proposed and an algorithm called OPR-Miner was designed \cite {OPRminer}. OPP-Miner \cite {OPPminer} and OPR-Miner \cite {OPRminer} can discover frequent OPPs and strong OPRs, respectively. However, it is difficult to reflect the differences between two classes of time series using these methods. To overcome this drawback, COPP-Miner was proposed \cite {COPPminer}, which can extract contrast OPPs as features for time series classification. 

Existing OPP mining algorithms consider data points at different time to be equally important. However, newer data usually have a more significant impact, while older data have a weaker impact. To address this issue, we introduce the forgetting mechanism into OPP mining to reduce the importance of old data, and propose an OPF mining scheme to mine OPPs with the forgetting mechanism.

\section{Problem definition}
\label{section:Problem definitions}

\begin{definition} \label{definition1}
  \rm (Time series \cite{Lin2002Finding}) A time series consists of numerical values of the same statistical indicator arranged in order, and is denoted as $\textbf{t}=(t_1,\cdots,t_i,\cdots,t_n)$, where $1\leq i\leq n$.
\end{definition}

\begin{definition}\label{definition2}
   \rm (Rank and relative order \cite{kimopp}) Given a time series $\textbf{t}=(t_1,\cdots,t_i,\cdots,t_n)$, the rank of element $t_i$, denoted by $rank_\textbf{t}(t_i)$, is defined by $1+y$, where $y$ is the number of elements less than $t_i$ in time series \textbf{t}. The relative order of $\textbf{t}$, denoted by $R(\textbf{t})$, is defined by $R(\textbf{t})=(rank_\textbf{t}(t_1), rank_\textbf{t}(t_2),\cdots, rank_\textbf{t}(t_n))$.
\end{definition}

\begin{definition}\label{definition3}
  \rm  (Order-preserving pattern, OPP \cite{kimopp}) 
A pattern represented by the relative order of a sub-time series is called an OPP.
 \end{definition}

 \begin{example}\label{example1}
Given a pattern $\textbf{p}=(15,32,29)$, we know that 15 is the smallest value in \textbf{p}, i.e., $rank_\textbf{p}(15)=1$. Similarly, $rank_\textbf{p}(32)=3$ and $rank_\textbf{p}(29)=2$. Thus, the OPP of \textbf{p} is $R(\textbf{p})=(1,3,2)$.
\end{example}

\begin{definition}\label{definition4}
\rm  (Occurrence \cite{OPPminer}) Given a time series $\textbf{t}=(t_1,\cdots,t_i,\cdots,t_n)$ and a pattern $\textbf{p}=(p_1, p_2,\cdots,p_m)$, if there exists a sub-time series 
   $\textbf{t}^{'}=(t_i, t_{i+1},..., t_{i+m-1})$ ($1 \leq i$ and $i+m-1\leq n$) satisfying $R(\textbf{t}^{'})=R(\textbf{p})$, then $\textbf{t}^{'}$ is an occurrence of pattern \textbf{p} in \textbf{t}, and the last position of $\textbf{t}^{'}$ $<$$i+m-1$$>$ is used to denote an occurrence.

\end{definition}

\begin{definition}\label{definition5}
  \rm  (Support with forgetting mechanism) The support of pattern \textbf{p} with the forgetting mechanism (or simply the support) is the sum of the forgetting values at the position of each occurrence, denoted by $fsup(\textbf{p},\textbf{t})$, which can be calculated using Equation \ref{equation1}: 
    \begin{equation}
     fsup(\textbf{p},\textbf{t})=\sum_{j=1}^{n} f_j
     \label{equation1}
    \end{equation}
where $f_j=e^{(-k\times(n-j))}$ is the forgetting value of position $<$$j$$>$, $k(0<k<1)$ is the forgetting factor, which can be set to $1/n$, $n$ is the length of the time series, and $<$$j$$>$ $(1\leq j\leq n)$ is the position at which pattern \textbf{p} occurs.  
\end{definition}

\begin{definition}\label{definition6}
 \rm(Frequent order-preserving pattern with forgetting mechanism, OPF) If the support of pattern \textbf{p}  with forgetting mechanism in a time series \textbf{t}  is no less than the minimum support threshold $minsup $, then pattern \textbf{p} is called a frequent OPF.
\end{definition}

\textbf{OPF mining:} The aim of OPF mining is to discover all frequent OPFs in a time series.

\begin{example}\label{example2}
Suppose we have a time series $\textbf{t}=(15,32,29,27,34,33,25,20,28,23)$ with corresponding indices as shown in Table \ref{Element indexes for time series} and a pattern $\textbf{p}=(1,3,2)$. The forgetting factor $k=1/10=0.1$, since the length of \textbf{t} is 10.

\begin{table} [h]
\centering
\tiny
    \scriptsize
       \caption{Element indexes for time series \textbf{t}}
    \begin{tabular}{lcccccccccc}
        \toprule
         \textbf{ID} &\textbf{1} &\textbf{2}&\textbf{3}&\textbf{4}&\textbf{5}&\textbf{6}&\textbf{7}&\textbf{8}&\textbf{9}&\textbf{10}\\
        \midrule
        \textbf{t}& 15	&32&	29&	27&	34&	33&	25 &	20&	28&	23\\
         
        \bottomrule
    \end{tabular}
	\label{Element indexes for time series}
\end{table}

From Example \ref{example1} , we know that $(t_1,t_2,t_3)$ is an occurrence of pattern \textbf{p} in time series \textbf{t}, denoted as $<$3$>$, since $R(t_1,t_2,t_3)=(1,3,2)$. Similarly, $(t_4,t_5,t_6)$ and $(t_8,t_9,t_{10})$ are another two occurrences of pattern \textbf{p} in time series \textbf{t}. There are then three occurrences of pattern \textbf{p} in time series \textbf{t}: $\{<$3$>$,$<$6$>$,$<$10$>$$\}$. The forgetting factor $k=0.1$. Hence, $fsup(p,t)=\sum_{j=1}^{n}f_j=f_3+f_6+f_{10}=e^{(-0.1\times(10-3))}+ e^{(-0.1\times(10-6))}+e^{(-0.1\times(10-10))}=2.17$. If $minsup=1.5$, then \textbf{p} is a frequent OPF. Similarly, all frequent OPFs in \textbf{t} are $F=\{(1,2), (2,1), (3,2,1), (2,1,3), (1,3,2)\}$.
\end{example}

The symbols used in this paper are shown in Table \ref{Notation}.

\begin{table}
\centering
    \scriptsize
    \caption{Notations}
       \begin{tabular}{lc}
        \toprule
        {Symbol}     & {Description}  \\
        \midrule
        {\textbf{t}}  & A time series $(t_1,\cdots,t_i,\cdots,t_n)$ with length $n$   \\
            {$rank_\textbf{t}(t_i)$}  & The rank of element $t_i$  \\
          {$R(\textbf{t})$} & The relative order of $\textbf{t}$\\
          {\textbf{p}}  & A pattern $(p_1,p_2,\cdots,p_m)$ with length $m$   \\
        {$fsup(\textbf{p},\textbf{t})$} &The support of pattern \textbf{p} with forgetting mechanism \\
         {$k$}  & The  forgetting factor\\
         {$minsup$}  & The minimum support threshold    \\
          {$L_\textbf{p}$}  & The occurrences of pattern \textbf{p}  \\
         {$pre_\textbf{p}$}  & The prefix set of \textbf{p}  \\
         {$suf_\textbf{p}$}  & The suffix set of \textbf{p}  \\
        {$\textbf{f}$}  & The forgetting values of time series \textbf{t}  \\
         {$f_j$}  & The forgetting value for  position $<$$j$$>$  \\
        \bottomrule
    \end{tabular}
	\label{Notation}
\end{table}

\section{Proposed algorithm}
\label{section:Proposed algorithm}
To tackle the task of OPF mining, we propose OPF-Miner, which consists of two parts, the generation of candidate patterns, and the calculation of support. To effectively reduce the number of candidate patterns, we propose GP-Fusion and maximal support priority strategies to improve the efficiency, as described in Section \ref{subsection:Generation of candidate patterns}. To avoid the excessive influence of older data in the time series, we propose an algorithm called SCF which applies two pruning strategies, as explained in Section \ref{Support calculation with forgetting mechanism}. Section \ref{OPF-Miner} describes the OPF-Miner algorithm for discovering all frequent OPFs. Finally, Section \ref{complexity} analyzes the time and space complexity of OPF-Miner. The framework of OPF-Miner is shown in Fig. \ref{Framework of OPF-Miner}.

\begin{figure}
    \centering
    \includegraphics[width=1\linewidth]{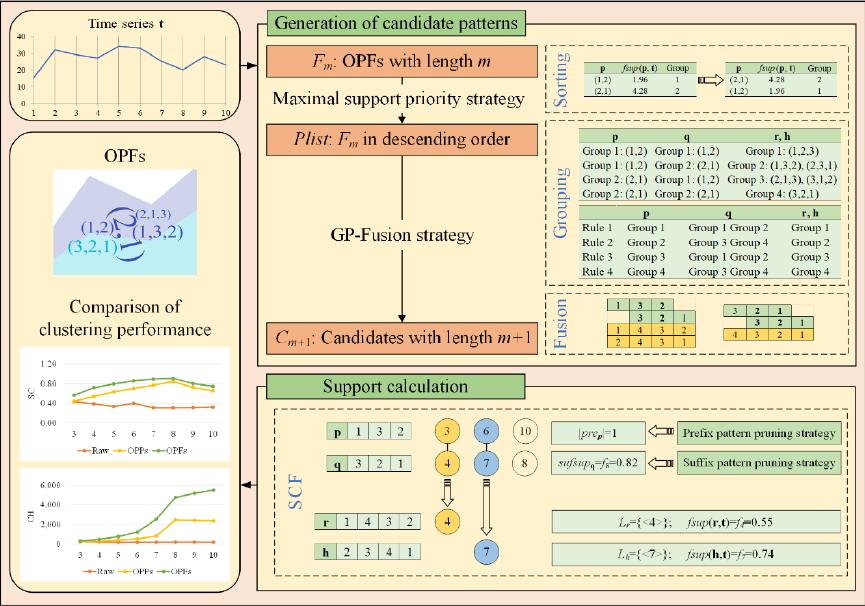}
    \caption{Framework of OPF-Miner. OPF-Miner contains two parts: generation of candidate patterns and support calculation. In the generation of candidate patterns, OPF-Miner adopts a maximal support priority strategy and a GP-Fusion strategy to avoid redundant pattern fusions. In the support calculation, an algorithm called SCF is employed, which uses prefix and suffix pattern pruning strategies to avoid redundant support calculations.}

    \label{Framework of OPF-Miner}
\end{figure}

\subsection{Generation of candidate patterns}
\label{subsection:Generation of candidate patterns}

In this section, we first review the pattern fusion method, and then propose a strategy called GP-fusion to reduce the calculations involved in pattern fusion. Finally, a maximal support priority strategy is proposed to further improve the efficiency.

\subsubsection{Pattern fusion}
To generate candidate patterns, traditional methods use an enumeration strategy, which can generate many redundant patterns. To overcome this problem, a pattern fusion method was proposed {\cite{OPPminer}}. Related definitions and illustrative examples are given .

\begin{definition}\label{definition7}
   \rm (Prefix OPP and suffix OPP) Given an OPP $\textbf{p}=(p_1,p_2,\cdots,p_m)$, \textbf{e}=\textit{prefixop}(\textbf{p})=$R(p_1,p_2,\cdots,p_{m-1})$ is called the prefix OPP of \textbf{p}, and \textbf{k}=\textit{suffixop}(\textbf{p})=$R(p_2,\cdots,p_m)$ is called the suffix OPP of \textbf{p}. Thus, patterns \textbf{e} and \textbf{k} are the order-preserving sub-patterns of \textbf{p}, and \textbf{p} is the order-preserving super-pattern of \textbf{e} and \textbf{k}.
\end{definition}

\begin{definition}\label{definition8}
   \rm (Pattern fusion) For $\textbf{p}=(p_1,p_2,\cdots,p_m)$ and $\textbf{q}=(q_1,q_2,\cdots,q_m)$, if \textit{suffixop}(\textbf{p})=\textit{prefixop}(\textbf{q}), then \textbf{p} and \textbf{q} can generate a super-pattern with length $m+1$. There are two possible cases:
   
\rm Case 1: If $p_1=q_m$, then \textbf{p} and  \textbf{q} 
can be fused into two patterns  $\textbf{r}=(r_1,r_2,\cdots,r_{m+1})$ and 
$\textbf{h}=(h_1,h_2,\cdots,h_{m+1})$, denoted as 
$\textbf{r}, \textbf{h}=\textbf{p}\oplus\textbf{q}$. $r_i$ and $h_i$ can be calculated as follows.

For pattern \textbf{r}, $r_1=p_1$ and $r_{m+1}=q_{m}+1$. If $q_i<p_1$, then $r_{i+1}=q_i$. Otherwise, $r_{i+1}=q_{i}+1$ $(1\leq i\leq m-1)$.

For pattern \textbf{h}, $h_1=p_{1}+1$ and $h_{m+1}=q_m$. If $p_i<q_m$, then $h_i=p_i$. Otherwise, $h_i=p_{i}+1$ $(2\leq i\leq m)$.

Case 2: If $p_1\neq q_m$, then \textbf{p} and \textbf{q} can be fused into one pattern $r=(r_1,r_2,\cdots,r_{m+1})$, denoted as $ \textbf{r}=\textbf{p}\oplus \textbf{q}$. $r_i$ can be calculated as follows.

If $p_1<q_m$, then $r_1=p_1$ and $r_{m+1}=q_m+1$. If $q_i<p_1$, then $r_{i+1}=q_i$. Otherwise, $r_{i+1}=q_i+1$ $(1\leq i\leq m-1)$.

If $p_1>q_m$, then $r_1=p_1+1$ and $r_{m+1}=q_m$. If $p_i<q_m$, then $r_i=p_i$. Otherwise, $r_i=p_i+1$ $(2\leq i\leq m)$.
\end{definition}

\begin{lemma}
The pattern fusion strategy is correct and complete \cite{OPRminer}, i.e., all feasible candidate patterns are generated and redundant candidate patterns are filtered. 
\end{lemma}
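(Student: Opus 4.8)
The plan is to split the claim into its two asserted halves and prove each for the fusion operator $\oplus$ of Definition~\ref{definition8}. \emph{Correctness} means every pattern emitted by $\textbf{p}\oplus\textbf{q}$ is a genuine relative order of length $m+1$ whose prefix and suffix OPPs are exactly the inputs $\textbf{p}$ and $\textbf{q}$ (so nothing spurious is produced); \emph{completeness} means every feasible length-$(m+1)$ candidate is obtained by fusing some admissible pair. The structural basis is the Apriori principle: an occurrence of $\textbf{r}=(r_1,\dots,r_{m+1})$ ending at position $\langle j\rangle$ simultaneously induces an occurrence of $\textit{suffixop}(\textbf{r})$ ending at $\langle j\rangle$ and an occurrence of $\textit{prefixop}(\textbf{r})$ ending at $\langle j-1\rangle$, and both induced maps are injective. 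Since the suffix embedding preserves positions (and hence forgetting values), $fsup(\textbf{r},\textbf{t})\le fsup(\textit{suffixop}(\textbf{r}),\textbf{t})$; this is what justifies restricting attention to candidates whose prefix and suffix OPPs both lie in the current frequent set, and patterns violating this are the ``redundant'' ones that must be filtered. The fusion filters them automatically, because it only ever combines patterns drawn from the frequent set.

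For completeness I would take an arbitrary feasible candidate $\textbf{r}=(r_1,\dots,r_{m+1})$ and set $\textbf{p}=\textit{prefixop}(\textbf{r})$, $\textbf{q}=\textit{suffixop}(\textbf{r})$. The overlapping block $R(r_2,\dots,r_m)$ is simultaneously $\textit{suffixop}(\textbf{p})$ and $\textit{prefixop}(\textbf{q})$, so $\textbf{p}$ and $\textbf{q}$ satisfy the fusion precondition $\textit{suffixop}(\textbf{p})=\textit{prefixop}(\textbf{q})$ and are therefore an admissible pair. It then remains only to confirm that $\textbf{r}$ is literally one of the patterns output by $\textbf{p}\oplus\textbf{q}$, which is where the case split of Definition~\ref{definition8} must be matched against $\textbf{r}$.

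The combinatorial insight driving correctness is that the ranks of the two endpoints $r_1$ and $r_{m+1}$ relative to the shared middle block are already fixed ($p_1-1$ middle elements lie below $r_1$, and $q_m-1$ lie below $r_{m+1}$), whereas the order of the endpoints relative to \emph{each other} is recorded in neither $\textbf{p}$ nor $\textbf{q}$, since $r_1$ and $r_{m+1}$ never co-occur in a single length-$m$ subpattern. Hence if $p_1\neq q_m$ at least one middle value separates the endpoints and their order is forced ($r_1<r_{m+1}$ when $p_1<q_m$, and $r_1>r_{m+1}$ when $p_1>q_m$), producing the single pattern of Case~2; if $p_1=q_m$ the endpoints occupy consecutive ranks with nothing between them, so both orderings are valid, producing the two patterns $\textbf{r},\textbf{h}$ of Case~1. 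I would then check, by tracking how inserting the opposite endpoint shifts the existing ranks, that the explicit formulas (for instance $r_{i+1}=q_i$ when $q_i<p_1$ and $r_{i+1}=q_i+1$ otherwise) reproduce exactly the relative order of $\textbf{r}$ and yield a genuine permutation of $\{1,\dots,m+1\}$; this establishes both that every emitted pattern is a legitimate OPP carrying the intended prefix and suffix (correctness) and that the candidate chosen above is recovered (completeness).

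The main obstacle is this last bookkeeping: one must verify that each $+1$ increment in the formulas correctly compensates for the element inserted at the far end, so that the output is at once a valid relative order and consistent with the prescribed prefix and suffix in every sub-case. Once the formulas are validated case by case, correctness and completeness follow together, and the occurrence-embedding remark of the first paragraph accounts for the filtering of redundant candidates.
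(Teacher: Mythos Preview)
The paper does not actually prove this lemma: it is stated with a citation to~\cite{OPRminer} and no argument is supplied in the present paper. There is therefore no in-paper proof to compare your proposal against.

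That said, your sketch is correct and is the natural argument one would expect in the cited reference. The decomposition into \emph{correctness} (every output of $\textbf{p}\oplus\textbf{q}$ is a valid relative order of length $m{+}1$ with prefix $\textbf{p}$ and suffix $\textbf{q}$) and \emph{completeness} (every length-$(m{+}1)$ OPP is recovered by fusing its own prefix and suffix OPPs) is the right one, and your key combinatorial observation---that $p_1$ and $q_m$ fix the ranks of the two endpoints relative to the shared middle block, so their mutual order is forced when $p_1\neq q_m$ and free when $p_1=q_m$---is precisely what underlies the case split in Definition~\ref{definition8}. Your appeal to the Apriori/anti-monotonicity principle to justify restricting fusion to pairs drawn from the frequent set is also the standard way to account for the ``redundant candidates are filtered'' clause. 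The residual bookkeeping you flag (checking that each $+1$ shift in the explicit formulas yields a genuine permutation of $\{1,\dots,m{+}1\}$ with the prescribed prefix and suffix) is mechanical and poses no obstacle.
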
\label {lemma1}

Example \ref{example3} illustrates the method of generating candidate patterns.

\begin{example}\label{example3} 
Given two patterns \textbf{p}=(1,3,2) and \textbf{q}=(3,2,1), there are two possible cases for the generation of candidate patterns with length four.

We know that \textit{prefixop}(\textbf{p})=R(1,3)=(1,2), since 1 is the smallest in (1,3), and 3 is the second smallest in (1,3). Similarly, \textit{suffixop}(\textbf{p})=R(3,2)=(2,1), \textit{prefixop}(\textbf{q})=R(3,2)=(2,1), and suffixop(\textbf{q})=R(2,1)=(2,1). Therefore, \textit{suffixop}(\textbf{p})=\textit{prefixop}(\textbf{q})=(2,1) and \textit{suffixop}(\textbf{q})= \textit{prefixop}(\textbf{q})=(2,1). Hence, \textbf{p} can fuse with \textbf{q}, and \textbf{q} can also fuse with \textbf{q}.

The fusion of \textbf{p}  and \textbf{q} satisfies Case 1, since \textit{suffixop}(\textbf{p})=\textit{prefixop}(\textbf{q})=(2,1), and $p_1=q_3=1$. 
Thus, we can generate two candidate patterns $\textbf{r}, \textbf{h}=\textbf{p}\oplus\textbf{q}$. For pattern \textbf{r}, $r_1=p_1=1$ and $r_4=q_3+1=2$. Moreover, $(r_2,r_3)=(q_1+1,q_2+1)=(4,3)$, since $q_1$$>$$p_1$ and $q_2$$>$$p_1$. Thus, pattern $\textbf{r}=(1,4,3,2)$. Similarly, pattern $\textbf{h}=(2,4,3,1)$.

The fusion of \textbf{q} and \textbf{q} satisfies Case 2, since \textit{suffixop}(\textbf{q})=\textit{prefixop}(\textbf{q})=(2,1), and $q_1$$>$$q_3$. 
Thus, we can generate one candidate pattern $\textbf{r}=\textbf{q}\oplus\textbf{q}$. For pattern \textbf{r}, $r_1=q_1+1=4$ and $r_4=q_3=1$. Moreover, $(r_2,r_3)=(q_2+1,q_3+1) =(3,2)$,   since $q_2$$>$$q_3$ and $q_3$$=$$q_3$. Thus, pattern $\textbf{r}=(4,3,2,1)$.

Therefore, the pattern fusion strategy generates three candidate patterns. However, the enumeration strategy generates four candidate patterns based on a pattern and thus generates eight candidate patterns based on \textbf{p} and \textbf{q}. Hence, the pattern fusion strategy outperforms the enumeration strategy, since the pattern fusion strategy generates fewer candidate patterns, and the fewer candidate patterns, the better the performance.

\end{example}

\subsubsection{GP-Fusion}

Although pattern fusion improves the mining efficiency, it has certain drawbacks. Suppose there are $l$ patterns with length $m$ stored in set $F_m$. The pattern fusion algorithm selects a pattern \textbf{p} from set $F_m$, and then has to check each pattern in set $F_m$. Thus, for pattern \textbf{p}, the pattern fusion method has to check $F_m$ a total of $l$ times, meaning that $l^2$ checks are carried out to generate all candidate patterns. To improve the efficiency, we propose a method called GP-Fusion which involves the following steps.

Step 1: Store patterns with length two in two groups. Store patterns (1,2) and (2,1) in Groups 1 and 2, respectively.

Step 2: Fuse patterns with length two into four groups. 
$\textbf{r}, \textbf{h}=\textbf{p}\oplus\textbf{q}$ are stored in four groups and the fusion results are shown in Table \ref{Fusion results for patterns with length two}.

\begin{table} [h]
\centering
    \scriptsize
       \caption{Fusion results for patterns with length two}
    \begin{tabular}{lcc}
        \toprule
         \textbf{p} &\textbf{q} &\textbf{r},\textbf{h}\\
        \midrule
        Group 1: (1,2)& Group 1: (1,2)	&Group 1: (1,2,3)\\
        Group 1: (1,2)& Group 2: (2,1)	&Group 2: (1,3,2), (2,3,1)\\
         Group 2: (2,1)& Group 1: (1,2)	&Group 3: (2,1,3), (3,1,2)\\
          Group 2: (2,1)& Group 2: (2,1)	&Group 4: (3,2,1)\\
        \bottomrule
    \end{tabular}
	\label{Fusion results for patterns with length two}
\end{table}

Step 3: Fuse patterns with length $m$ ($m$$>$2) into four groups.  The fusion rules are shown in Table \ref{Fusion rules for patterns}. We take Rule 1 as an example, and note that GP-Fusion selects a pattern \textbf{p} from Group 1 and then a pattern \textbf{q} from Groups 1 and 2. Finally, super-patterns \textbf{r} and \textbf{h} are stored in Group 1.

\begin{table} [h]
\centering
    \scriptsize
       \caption{Fusion rules for patterns with length $m$ ($m$$>$2)}
    \begin{tabular}{lccc}
        \toprule
         \textbf{}&\textbf{p} &\textbf{q} &\textbf{r},\textbf{h}\\
        \midrule
        Rule 1&Group 1& Group 1	Group 2&Group 1\\
        Rule 2&Group 2& Group 3	Group 4&Group 2\\
        Rule 3&Group 3& Group 1	Group 2&Group 3\\
        Rule 4&Group 4& Group 3	Group 4&Group 4\\
        \bottomrule
    \end{tabular}
	\label{Fusion rules for patterns}
\end{table}

Before proving that the GP-Fusion strategy is complete, we first prove two theorems to clarify the characteristics of the patterns in the four groups.

\begin{theorem}\label {theorem1}
 The first two ranks of each pattern in Groups 1 and 2 are (1,2), and those for each pattern in Groups 3 and 4 are (2,1). 
\end{theorem}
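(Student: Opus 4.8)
The phrase ``the first two ranks are $(1,2)$'' should be read as $R(p_1,p_2)=(1,2)$, i.e.\ $p_1<p_2$, and ``the first two ranks are $(2,1)$'' as $R(p_1,p_2)=(2,1)$, i.e.\ $p_1>p_2$; this is how the entries in Table \ref{Fusion results for patterns with length two} behave (for instance $(1,3,2)$ has increasing first pair and $(2,1,3)$ has decreasing first pair). The plan is therefore to prove, by induction on the pattern length $m$, that every pattern $\textbf{p}$ in Groups 1 and 2 satisfies $p_1<p_2$, while every pattern in Groups 3 and 4 satisfies $p_1>p_2$. For the base case $m=3$ I would simply read off the six patterns produced in Step 2: the two members of Group 1$\cup$Group 2, namely $(1,2,3)$, $(1,3,2)$, $(2,3,1)$, all have an increasing first pair, and the members of Group 3$\cup$Group 4, namely $(2,1,3)$, $(3,1,2)$, $(3,2,1)$, all have a decreasing first pair.

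The crux of the inductive step is the claim that pattern fusion preserves the relative order of the first two elements: if $\textbf{r},\textbf{h}=\textbf{p}\oplus\textbf{q}$ (or $\textbf{r}=\textbf{p}\oplus\textbf{q}$ alone in Case 2), then $R(r_1,r_2)=R(h_1,h_2)=R(p_1,p_2)$. I would obtain this from the stronger fact that \textit{prefixop}(\textbf{r})=\textit{prefixop}(\textbf{h})=\textbf{p}, which is exactly what the fusion in Definition \ref{definition8} is built to guarantee and is part of the correctness asserted in Lemma \ref{lemma1}; it can also be confirmed directly from the formulas in Definition \ref{definition8}. Once \textit{prefixop}(\textbf{r})=R(r_1,\dots,r_m)=\textbf{p} is in hand, restricting these equal relative orders to the first two coordinates immediately yields $R(r_1,r_2)=R(p_1,p_2)$, and likewise $R(h_1,h_2)=R(p_1,p_2)$.

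It then remains to combine the key claim with the fusion rules in Table \ref{Fusion rules for patterns}. The decisive observation is that each Rule $i$ takes $\textbf{p}$ from Group $i$ and files the fused patterns $\textbf{r},\textbf{h}$ back into the \emph{same} Group $i$; that is, the output group always coincides with the group of $\textbf{p}$. By the induction hypothesis a pattern $\textbf{p}$ drawn from Group 1 or 2 has $R(p_1,p_2)=(1,2)$ and one drawn from Group 3 or 4 has $R(p_1,p_2)=(2,1)$, so by the key claim the fused patterns inherit precisely the same first-pair order and are stored in a group of the same type, closing the induction.

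The hard part is the key claim. The delicate configuration is Case 1 (and Case 2 with $p_1<q_m$), where $r_2$ is drawn from $\textbf{q}$ rather than from $\textbf{p}$; a brute-force comparison of $r_1$ and $r_2$ there has to invoke the overlap constraint \textit{suffixop}(\textbf{p})=\textit{prefixop}(\textbf{q}) to verify that the order of $r_1,r_2$ still matches that of $p_1,p_2$. Routing the argument through \textit{prefixop}(\textbf{r})=\textbf{p} is what lets me avoid this case split entirely, since the relative order of the first $m$ coordinates of $\textbf{r}$ is then literally $\textbf{p}$, and the first-pair order comes for free.
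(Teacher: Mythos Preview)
Your proposal is correct and follows essentially the same inductive argument as the paper: base case $m=3$ by inspection, then induction using that each rule in Table~\ref{Fusion rules for patterns} draws $\textbf{p}$ from Group $i$ and files the fused patterns back into the same Group $i$, so the first-pair order is inherited from $\textbf{p}$. The only difference is that you justify the inheritance step explicitly via $\textit{prefixop}(\textbf{r})=\textbf{p}$ and Lemma~1, whereas the paper simply asserts that $R(p_1,p_2)=(1,2)$ implies the first two ranks of $\textbf{r}$ are $(1,2)$ without spelling out why.
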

\begin{proof}
When the length of the pattern is three, the patterns in Groups 1 and 2 are \{(1,2,3)\} and \{(1,3,2), (2,3,1)\}, respectively. Thus, the first two ranks of each pattern with length three in Groups 1 and 2 are (1,2). Now, we suppose the first two ranks of each pattern with length $m$ in Groups 1 and 2 are (1,2). We will prove that the first two ranks of each pattern of length $m+1$ in Groups 1 and 2 are also (1,2). According to Table \ref{Fusion rules for patterns}, each pattern \textbf{r} in Groups 1 and 2 is generated by $\textbf{p}\oplus\textbf{q}$, 
where $\textbf{p}=(p_1,p_2,\cdots,p_m)$ is in Groups 1 and 2. Since the first two ranks of each pattern \textbf{p} are (1,2), i.e., $R(p_1,p_2)=(1,2)$, the first two ranks of each pattern \textbf{r} are also (1,2). Hence, the first two ranks of each pattern with length $m+1$ in Groups 1 and 2 are also (1,2). Similarly, the first two ranks of each pattern in Groups 3 and 4 are (2,1). 
\end {proof}

\begin{theorem}\label {theorem2}
 The last two ranks of each pattern in Groups 1 and 3 are (1,2), and those of each pattern in Groups 2 and 4 are (2,1).
\end{theorem}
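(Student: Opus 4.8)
The plan is to prove the statement by induction on the pattern length $m$, mirroring the argument already used for Theorem \ref{theorem1} but reading off the suffix (i.e. the $\textbf{q}$) side of each fusion rather than the prefix side. The base case is $m=3$, where the four groups are $\{(1,2,3)\}$, $\{(1,3,2),(2,3,1)\}$, $\{(2,1,3),(3,1,2)\}$, and $\{(3,2,1)\}$; a direct inspection of the last two ranks shows that every pattern in Groups 1 and 3 has last two ranks $(1,2)$ and every pattern in Groups 2 and 4 has last two ranks $(2,1)$, so the claim holds at length three.

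First I would isolate the key structural fact about fusion: whenever $\textbf{r},\textbf{h}=\textbf{p}\oplus\textbf{q}$, we have $\textit{suffixop}(\textbf{r})=\textit{suffixop}(\textbf{h})=\textbf{q}$, so that the last two ranks of both $\textbf{r}$ and $\textbf{h}$ coincide with the last two ranks of $\textbf{q}$. This follows by substituting the explicit rank formulas of Definition \ref{definition8} into $R(r_2,\cdots,r_{m+1})$ and $R(h_2,\cdots,h_{m+1})$ and checking that each reduces to $R(q_1,\cdots,q_m)=\textbf{q}$; in particular $R(r_m,r_{m+1})=R(h_m,h_{m+1})=R(q_{m-1},q_m)$, which is exactly the last two ranks of $\textbf{q}$.

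Next I would use the way the four groups are populated to track, for each target group, the groups from which the suffix $\textbf{q}$ is drawn. A fused pattern is assigned to Group 1 or Group 3 precisely when its suffix $\textbf{q}$ has last two ranks $(1,2)$ (equivalently, when $\textbf{q}$ itself lies in Groups 1 and 3), and to Group 2 or Group 4 precisely when $\textbf{q}$ has last two ranks $(2,1)$ (equivalently, when $\textbf{q}$ lies in Groups 2 and 4); this is the suffix-side analogue of the prefix bookkeeping in Theorem \ref{theorem1}. Granting the inductive hypothesis at length $m$, every $\textbf{q}$ coming from Groups 1 and 3 has last two ranks $(1,2)$ and every $\textbf{q}$ coming from Groups 2 and 4 has last two ranks $(2,1)$. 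Combining this with the structural fact, each length-$(m+1)$ pattern in Groups 1 and 3 inherits last two ranks $(1,2)$ and each one in Groups 2 and 4 inherits last two ranks $(2,1)$, which closes the induction.

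I expect the main obstacle to be establishing $\textit{suffixop}(\textbf{r})=\textit{suffixop}(\textbf{h})=\textbf{q}$, since it requires a careful case split over Case 1 ($p_1=q_m$, which produces both $\textbf{r}$ and $\textbf{h}$) and Case 2 ($p_1\neq q_m$) of Definition \ref{definition8}. The $\textbf{r}$-branch is transparent because its trailing entries are built directly from $\textbf{q}$, but in the $\textbf{h}$-branch the last entries are built from $\textbf{p}$, so one must invoke the fusion constraint $\textit{suffixop}(\textbf{p})=\textit{prefixop}(\textbf{q})$ to re-express them in terms of $\textbf{q}$ before the relative order can be read off. Once that identity is secured, the group-tracking step and the application of the inductive hypothesis are routine.
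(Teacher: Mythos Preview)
Your overall plan is exactly the ``same method as for Theorem~\ref{theorem1}'' that the paper cites as its entire proof: induction on $m$, base case $m=3$ checked by inspection, and the inductive step driven by the identity $\textit{suffixop}(\textbf{r})=\textit{suffixop}(\textbf{h})=\textbf{q}$ so that the last two ranks of any fused pattern are inherited from $\textbf{q}$. You have supplied considerably more detail than the paper does, in particular the case-by-case verification of the suffix identity from Definition~\ref{definition8}.

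One concrete mismatch is worth flagging. You assert that a fused pattern is placed in Groups~1 or~3 precisely when $\textbf{q}$ lies in Groups~1 or~3. Read literally, Table~\ref{Fusion rules for patterns} does not say this: Rules~1 and~3, which are the rules that deposit results into Groups~1 and~3, draw $\textbf{q}$ from Groups~1 and~2, not from Groups~1 and~3. Under the inductive hypothesis a $\textbf{q}$ taken from Group~2 has last two ranks $(2,1)$, so with that reading your induction would not close (e.g., $\textbf{p}=(1,2,3)$ in Group~1 fused with $\textbf{q}=(1,3,2)$ in Group~2 yields $(1,2,4,3)$, whose last two ranks are $(2,1)$, yet Rule~1 as stated stores it in Group~1). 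The paper's one-line proof glosses over exactly this point, so the discrepancy sits in the paper's description of its group assignment rather than in your strategy; nevertheless your parenthetical ``equivalently, when $\textbf{q}$ itself lies in Groups 1 and 3'' does not follow from the fusion rules as written and should be justified (or the intended group assignment clarified) before you rely on it.
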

\begin{proof}
Using the same method as for Theorem \ref{theorem1}, we can see that the last two ranks of each pattern in Groups 1 and 3 are (1,2), and those of each pattern in Groups 2 and 4 are (2,1).
\end {proof}

\begin{theorem}\label {theorem3}
The GP-Fusion strategy is complete. 
\end{theorem}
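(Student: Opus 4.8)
The plan is to reduce the completeness of GP-Fusion to the completeness of ordinary pattern fusion, which is already guaranteed by Lemma 1. By that lemma every feasible candidate pattern of length $m+1$ is produced as $\textbf{r},\textbf{h}=\textbf{p}\oplus\textbf{q}$ for some pair of length-$m$ patterns with \textit{suffixop}(\textbf{p})=\textit{prefixop}(\textbf{q}). Since GP-Fusion only ever performs a subset of these fusions, it suffices to prove that it never discards a productive one: for every fusible pair $(\textbf{p},\textbf{q})$, the groups of $\textbf{p}$ and $\textbf{q}$ match one of the four rules of Table \ref{Fusion rules for patterns}, so the fusion is actually attempted. Completeness of GP-Fusion then follows immediately.

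I would set this up as an induction on the pattern length $m$. The base case is $m=2$: the two seed patterns $(1,2)$ and $(2,1)$ are fused in all four combinations, and Table \ref{Fusion results for patterns with length two} shows the outputs populate Groups 1--4 exactly, so every fusible pair is covered. For the inductive step I would take an arbitrary fusible pair $\textbf{p},\textbf{q}$ of length $m$ and read off their groups using Theorems \ref{theorem1} and \ref{theorem2}: a pattern's group is pinned down by the relative order of its first two ranks together with the relative order of its last two ranks, so $\textbf{p}$ and $\textbf{q}$ each lie in a well-defined group.

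The key step is to turn the fusion condition \textit{suffixop}(\textbf{p})=\textit{prefixop}(\textbf{q}) into a constraint on these groups. Because the suffix OPP of $\textbf{p}$ and the prefix OPP of $\textbf{q}$ describe the same overlapping block, the relative order of the connecting ranks of $\textbf{p}$ must agree with the relative order of the leading ranks of $\textbf{q}$; combined with Theorems \ref{theorem1} and \ref{theorem2}, this shows that the group of $\textbf{p}$ forces $\textbf{q}$ into precisely the groups listed opposite it in Table \ref{Fusion rules for patterns}, so no fusible pair escapes the four rules. I would then check, again via Theorems \ref{theorem1} and \ref{theorem2}, that each super-pattern produced by Definition \ref{definition8} inherits its first two ranks from $\textbf{p}$ and its last two ranks from $\textbf{q}$; hence $\textbf{r}$ and $\textbf{h}$ are stored in the group the rule prescribes, the group invariants are preserved, and the induction closes.

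I expect the translation step to be the main obstacle. Showing rigorously that equality of the overlapping suffix and prefix OPPs compels $(\textbf{p},\textbf{q})$ into an admissible group pair requires tracking the rank shifts in both branches of Definition \ref{definition8} (the case $p_1=q_m$, which emits two patterns, and the case $p_1\neq q_m$, which emits one) and confirming that every output lands in the predicted group. This bookkeeping is delicate, and it is exactly here that an overlooked fusible combination would break completeness, so careful case analysis of the connecting ranks is the crux on which the whole argument rests.
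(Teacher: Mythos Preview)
Your proposal is correct, but it takes a noticeably more laborious route than the paper. The paper does not set up a fresh induction on $m$; instead it treats Theorems \ref{theorem1} and \ref{theorem2} as already establishing the group invariants for all lengths, and then argues by contradiction in one stroke: if $\textbf{p}$ lies in Groups 1 or 3 and $\textbf{q}$ in Groups 3 or 4, then by Theorem \ref{theorem2} the last two ranks of $\textbf{p}$ have relative order $(1,2)$ while by Theorem \ref{theorem1} the first two ranks of $\textbf{q}$ have relative order $(2,1)$, so \textit{suffixop}(\textbf{p}) and \textit{prefixop}(\textbf{q}) already disagree on their first two entries and the pair cannot fuse. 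The symmetric case ($\textbf{p}$ in Groups 2 or 4, $\textbf{q}$ in Groups 1 or 2) is handled the same way. That is the entire proof.

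The difference in approach is that you prove directly that every fusible pair falls under one of the four rules, whereas the paper proves the contrapositive: every pair \emph{excluded} by the rules is non-fusible. Your version also re-verifies that super-patterns land in the correct group, but this is exactly the content of Theorems \ref{theorem1} and \ref{theorem2} and need not be redone. In particular, the ``delicate bookkeeping'' you anticipate---tracking rank shifts through both cases of Definition \ref{definition8}---is unnecessary for completeness once those two theorems are available; the only fact you need from the overlap condition is that it forces the relative order of the last two entries of $\textbf{p}$ to match that of the first two entries of $\textbf{q}$, which is a one-line observation. Your argument would work, but the paper's contrapositive framing avoids the case analysis entirely.
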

\begin{proof}
Proof by contradiction. We assume that pattern \textbf{p} is in Groups 1 or 3, pattern \textbf{q} is in Groups 3 or 4, and \textbf{p} can fuse with \textbf{q}. From Theorem \ref{theorem2}, we know that the last two ranks of pattern  \textbf{p} are (1,2). According to Theorem \ref{theorem1}, we know that the first two ranks of pattern \textbf{q} are (2,1). Since (1,2) is not equal to (2,1), \textbf{p} cannot fuse with  \textbf{q}. This contradicts the assumption that \textbf{p} can fuse with  \textbf{q}. Similarly, if pattern \textbf{p} is in Groups 2 or 4, and \textbf{q} is in Groups 1 or 2, then \textbf{p} cannot fuse with \textbf{q}. Hence, the GP-Fusion strategy is complete.
\end {proof}
The main advantage of this approach is as follows. We know that GP-Fusion divides all patterns into four groups based on their characteristics. Each pattern can only be fused with patterns in two groups, rather than four groups. Hence, GP-Fusion only needs to carry out $l^2/2$ pattern fusions rather than $l^2$, which can effectively improve the efficiency of pattern fusion.

\subsubsection{Maximal support priority strategy}

To improve the efficiency of generating candidate patterns using a heuristic approach, we propose a maximal support priority strategy. 

\textbf{Maximal support priority strategy.} We sort the frequent patterns in $F_m$  in descending order according to their supports with the forgetting mechanism to form $Plist$. Then, we select each pattern \textbf{p} from $Plist$ in turn. We create $Qlist$ to store the patterns that can be fused with pattern \textbf{p} using GP-Fusion. Finally, we fuse pattern \textbf{p} with each pattern in $Qlist$.

\begin{theorem}\label {theorem4}
The maximal support priority strategy is complete. 
\end{theorem}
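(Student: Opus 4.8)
The plan is to reduce Theorem~\ref{theorem4} to the completeness of GP-Fusion already established in Theorem~\ref{theorem3}. The key observation is that the maximal support priority strategy differs from plain GP-Fusion only in the \emph{order} in which the patterns of $F_m$ are visited; it neither discards any pattern from $F_m$ nor alters the rule that decides which pairs are fused. Hence, if I can show that the strategy still performs every fusion that GP-Fusion performs, completeness follows immediately.

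First I would spell out what completeness requires: every feasible candidate pattern of length $m+1$ must be produced, where a feasible candidate is exactly the result $\textbf{p}\oplus\textbf{q}$ of some pair $\textbf{p},\textbf{q}\in F_m$ with $\textit{suffixop}(\textbf{p})=\textit{prefixop}(\textbf{q})$ together with the matching group-membership condition of GP-Fusion. It therefore suffices to verify that the strategy executes $\textbf{p}\oplus\textbf{q}$ for every such compatible pair.

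Next I would check the two ingredients. (i) $Plist$ is only a reordering of $F_m$: sorting in descending order of $fsup$ neither adds nor deletes patterns, so the loop that selects each \textbf{p} from $Plist$ in turn visits every pattern of $F_m$ exactly once as the left operand of a fusion. (ii) For each selected \textbf{p}, the set $Qlist$ is rebuilt from scratch using the GP-Fusion compatibility rules, and these rules depend only on the group of a pattern and on the $\textit{suffixop}$/$\textit{prefixop}$ match, which are quantities independent of the support values and of the visiting order. Consequently $Qlist$ contains precisely the patterns compatible with \textbf{p}, and no valid partner \textbf{q} can be dropped as a side effect of the sorting.

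Putting these together, the strategy performs exactly the same collection of fusions as GP-Fusion, merely in a different sequence, so the two produce the identical set of candidate patterns. Invoking Theorem~\ref{theorem3}, which guarantees that GP-Fusion generates all feasible candidates, I conclude that the maximal support priority strategy is complete. The main obstacle is step (ii): one must argue carefully that prioritizing high-support patterns introduces no implicit early stopping or skipping, i.e., that the heuristic reordering is used only to choose the processing order and never to prune a compatible pair, since this decoupling of $Qlist$ from the ordering is what the whole reduction rests on.
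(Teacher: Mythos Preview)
Your proposal is correct and follows essentially the same approach as the paper: the paper's proof simply notes that the maximal support priority strategy ``does not remove any patterns, and only adjusts the fusion order of the patterns,'' so the mining results are unchanged. Your write-up is a more detailed and explicit version of this same reordering argument, invoking Theorem~\ref{theorem3} to close the loop.
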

\begin{proof}
We know that the maximal support priority strategy does not remove any patterns, and only adjusts the fusion order of the patterns. It does not change the mining results. Hence, the maximal support priority strategy is complete.
\end {proof}

We consider Example \ref{example4} to illustrate the specific process of GP-Fusion with the maximal support priority strategy.

\begin{example}\label{example4}
 Given the same time series \textbf{t} as in Example \ref{example2}, and an OPF set $F_3 =\{(3,2,1), (2,1,3), (1,3,2)\}$, the support with forgetting mechanism for each pattern is shown in Table \ref{OPFs with length three}. 

 \begin{table} [h]
\centering
    \scriptsize
       \caption{OPFs with length three}
    \begin{tabular}{lcc}
        \toprule
         \textbf{p} &\textbf{$fsup(\textbf{p}, \textbf{t})$}&\textbf{Group}\\
        \midrule
        (3,2,1)& 2.11&4\\
        (2,1,3)& 1.51&3\\
        (1,3,2)& 2.17&2\\
         \bottomrule
    \end{tabular}
	\label{OPFs with length three}
\end{table}

Step 1: Sort the patterns in $F_3$ in descending order of $fsup(\textbf{p}, \textbf{t})$ to obtain $Plist=\{(1,3,2), (3,2,1), (2,1,3)\}$, as shown in Table \ref{in descending order}.

 \begin{table} [h]
\centering
    \scriptsize
       \caption{$F_3$ in descending order}
    \begin{tabular}{lcc}
        \toprule
         \textbf{p} &\textbf{$fsup(\textbf{p}, \textbf{t})$}&\textbf{Group}\\
        \midrule
        (1,3,2)& 2.17&2\\
        (3,2,1)& 2.11&4\\
        (2,1,3)& 1.51&3\\
     
         \bottomrule
    \end{tabular}
	\label{in descending order}
\end{table}

Step 2: From Table \ref{in descending order}, we first select pattern \textbf{p}=(1,3,2), which belongs to Group 2. Following the GP-Fusion strategy, we select patterns in Groups 3 and 4 from $Plist$ to obtain $Qlist=\{(3,2,1), (2,1,3)\}$. Finally, we fuse \textbf{p} with each pattern in $Qlist$ to generate candidate patterns (1,4,3,2), (2,4,3,1), and (1,3,2,4).
 
Step 3:  We iterate Step 2 to get the set of candidate patterns with length four $C_4$=$\{$(1,4,3,2), (2,4,3,1),(1,3,2,4),(4,3,2,1), (3,2,1,4),(4,2,1,3),(2,1,4,3), (3,1,4,2)$\}$.

\end{example}

\subsection{Support calculation with forgetting mechanism}
\label{Support calculation with forgetting mechanism}

In this section, we propose an SCF algorithm with two pruning strategies to calculate the support. 

To utilize the matching results of the sub-patterns, SCF calculates the occurrences of super-patterns with length $m+1$ based on the occurrences of sub-patterns with length $m$. The occurrences of pattern \textbf{p} are stored in $L_p$. To avoid repeatedly calculating the forgetting value for each position, we calculate and store them in an array \textbf{f}. SCF consists of the following four steps.

Step 1: Calculate the forgetting values for each position in a time series using Equation \ref{equation1} and store them in array \textbf{f}.

 \begin{example}
We adopt the same time series \textbf{t} as in Example \ref{example2}, for which the forgetting values are shown in Table \ref{Forgetting values for time series}.

\begin{table} [h]
\centering
\tiny
    \scriptsize
       \caption{Forgetting values for time series \textbf{t}}
    \begin{tabular}{lcccccccccc}
        \toprule
         \textbf{ID} &\textbf{1} &\textbf{2}&\textbf{3}&\textbf{4}&\textbf{5}&\textbf{6}&\textbf{7}&\textbf{8}&\textbf{9}&\textbf{10}\\
        \midrule
        \textbf{f}&   0.41	&0.45 &0.50&	0.55&	0.61&	0.67&	0.74  &	0.82&	0.90 &	1.00\\
               \bottomrule
    \end{tabular}
	\label{Forgetting values for time series}
\end{table}
 
 \end{example}
 
Step 2: Store $L_\textbf{p}$ in the prefix set $pre_\textbf{p}$ and suffix set $suf_\textbf{p}$, and $fsup(\textbf{p},\textbf{t})$ in the suffix support set $sufsup_\textbf{p}$.
 
Step 3: Scan sets $pre_\textbf{p}$ and $suf_\textbf{q}$ to generate the 
 occurrence sets $L_\textbf{r}$ and $L_\textbf{h}$ of super-patterns \textbf{r} 
 and \textbf{h}, which can be generated by $\textbf{p}\oplus\textbf{q}$.
 Assuming $<$$i$$>$ $\in pre_\textbf{p}$ and $<$$j$$>$$ \in suf_\textbf{q}$, 
 SCF is used to generate the occurrence sets $L_\textbf{r}$ and $L_\textbf{h}$ according to the following rules.
 
 Rule 1: According to Definition \ref{definition8}, if $p_1=q_m$, then $\textbf{r}, \textbf{h}=\textbf{p}\oplus\textbf{q}$.

 If $i+1=j$, then $<$$j$$>$ may be an occurrence of \textbf{r} or \textbf{h}. We then compare the values of $t_{first}$ and $t_{last}$ in time series \textbf{t}, where $first=j-m$ and $last=j$.

1)	If $t_{first}<t_{last}$, then $<$$j$$>$ is an occurrence of \textbf{r}, i.e., $j \in L_\textbf{r}$, and $<$$i$$>$ and $<$$j$$>$ are removed from $pre_\textbf{p}$  and $suf_\textbf{q}$, respectively.

2)	If $t_{first}>t_{last}$, then $<$$j$$>$ is an occurrence of \textbf{h}, i.e., $j \in L_\textbf{h}$, and $<$$i$$>$ and $<$$j$$>$ are removed from $pre_\textbf{p}$  and $suf_\textbf{q}$, respectively.

3)	If $t_{first}=t_{last}$, then $<$$j$$>$ is not an occurrence of \textbf{r} or \textbf{h}.

Rule 2: According to Definition \ref{definition8}, if $p_1\neq q_m$, then $\textbf{r}=\textbf{p}\oplus\textbf{q}$.

If $i+1=j$, then $<$$j$$>$ is an occurrence of \textbf{r}, i.e., $j \in L_\textbf{r}$, and $<$$i$$>$ and $<$$j$$>$ are removed from $pre_\textbf{p}$  and $suf_\textbf{q}$ respectively.

Step 4: Update $fsup(\textbf{r},\textbf{t}) $ and  $sufsup_\textbf{q}$. 
If  $<$$j$$>$ is an occurrence of pattern \textbf{r}, then $fsup(\textbf{r},\textbf{t})+=f_j$ and 
$sufsup_\textbf{q}-=f_j$, since $<$$j$$>$ is an occurrence of pattern \textbf{q}.
Note that $fsup(\textbf{r},\textbf{t})$ is related to $<$$j$$>$ but not $<$$i$$>$. Hence, we only need to update $sufsup_\textbf{q}$, and not $presup_\textbf{p}$.

We use Example \ref{example6} to demonstrate the process of SCF.

\begin{example}\label{example6}
Consider the same time series \textbf{t} as in Example \ref{example2}. The matching sets of \textbf{p}=(1,3,2) and \textbf{q}=(3,2,1) are $L_\textbf{p}=\{3,6,10\}$ and $L_\textbf{q}=\{4,7,8\}$, respectively.

From Example \ref{example3}, we know that $ \textbf{p}\oplus \textbf{q}$ can generate two candidate patterns, \textbf{r}=(1,4,3,2) and \textbf{h}=(2,4,3,1). We know $pre_\textbf{p}=L_\textbf{p}=\{3,6,10\}$, $|pre_\textbf{p}|=3$, $suf_\textbf{q}=L_\textbf{q}=\{4,7,8\}$, and 
$sufsup_\textbf{q}=f_4+f_7+f_8=0.55+0.74+0.82=2.11$. It can be seen that $3 \in  pre_\textbf{p}$, and $3+1=4 \in suf_\textbf{q}$. 
Hence, according to Rule 1, $<$4$>$ may be an occurrence of \textbf{r} or \textbf{h}.
Since $t_{first}=t_{4-3}=t_1=15$, $t_{last}=t_4=27$, and $15<27$, $<$4$>$ is an occurrence of \textbf{r}. We then add $<$4$>$ to $L_\textbf{r}$, and remove $<$3$>$ and $<$4$>$ from $pre_\textbf{p}$ and $suf_\textbf{q}$, respectively. 
Moreover, $fsup(\textbf{r},\textbf{t})=f_4 =0.55$, and $sufsup_\textbf{q}=2.11-f_4=2.11-0.55=1.56$. Similarly, the matching set of \textbf{h} is $L_\textbf{h}={7}$, and $<$6$>$ and $<$7$>$ are removed from $pre_\textbf{p}$ and $suf_\textbf{q}$ respectively, i.e., $pre_\textbf{p}=\{10\}$ and $suf_\textbf{q}=\{8\}$, $fsup(\textbf{h},\textbf{t})=f_7=0.74$, and $sufsup_\textbf{q}=1.56-0.74=0.82$.

\end{example}

We also propose a prefix pattern pruning strategy and a suffix pattern pruning strategy to improve the mining efficiency.

\textbf{Prefix pattern pruning strategy.} If $|pre_\textbf{p}|$$<$\textit {minsup}, then we can prune pattern \textbf{p} and its super-patterns, which means that it is no longer possible to generate OPFs with \textbf{p} as the prefix OPP.

\begin{theorem}\label {theorem5}
The prefix pattern pruning strategy is complete.
\end{theorem}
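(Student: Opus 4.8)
The plan is to show that the prefix pattern pruning strategy is complete in the sense that it never discards a pattern that is in fact a frequent OPF. Concretely, I would prove that whenever $|pre_\textbf{p}| < minsup$, every pattern \textbf{s} that has \textbf{p} as its prefix OPP (and, inductively, every further super-pattern obtained by extending \textbf{p} on the right) satisfies $fsup(\textbf{s},\textbf{t}) < minsup$, so that none of the pruned patterns could ever have been frequent. Since the mining results are exactly the set of frequent OPFs, removing only patterns that are provably infrequent establishes completeness.

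The argument rests on two elementary facts. First, by Definition \ref{definition5} each forgetting value is $f_j = e^{-k(n-j)}$ with $0 < k < 1$ and $1 \le j \le n$; since $n - j \ge 0$ we have $f_j \le e^{0} = 1$, and therefore for any pattern \textbf{s} with occurrence set $L_\textbf{s}$ it holds that $fsup(\textbf{s},\textbf{t}) = \sum_{j \in L_\textbf{s}} f_j \le |L_\textbf{s}|$. Second, I would establish an anti-monotonicity property of occurrence counts: if $<j>$ is an occurrence of a length-$(m+1)$ super-pattern \textbf{s} whose prefix OPP is \textbf{p}, then by Definitions \ref{definition4} and \ref{definition7} the sub-time series at positions $j-m,\ldots,j-1$ realizes the relative order \textbf{p}, so $<j-1>$ is an occurrence of \textbf{p}. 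The map $<j> \mapsto <j-1>$ is injective, which yields $|L_\textbf{s}| \le |L_\textbf{p}| = |pre_\textbf{p}|$.

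Combining the two facts gives $fsup(\textbf{s},\textbf{t}) \le |L_\textbf{s}| \le |pre_\textbf{p}| < minsup$, so \textbf{s} is not a frequent OPF; applying the same bound to \textbf{p} itself shows \textbf{p} is infrequent as well. For super-patterns longer than $m+1$ I would iterate this injection along the prefix chain (equivalently, argue by induction on pattern length), since each additional right-extension maps occurrences of the longer pattern injectively into occurrences of its shorter prefix, and the cardinality can only decrease. Hence discarding \textbf{p} together with all patterns having \textbf{p} as prefix OPP removes only infrequent patterns, which is precisely completeness.

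The hard part will be making the occurrence-count anti-monotonicity fully rigorous: I must verify that truncating an occurrence of \textbf{s} to its first $m$ positions always produces a genuine occurrence of \textbf{p} (and not merely of some other pattern), which follows from \textbf{p} being the prefix OPP of \textbf{s}, and that this truncation is injective across all occurrences in $L_\textbf{s}$. The forgetting-value bound $f_j \le 1$ is the other indispensable ingredient and must be invoked explicitly, since it is exactly what allows the plain cardinality $|pre_\textbf{p}|$ to dominate the weighted support $fsup(\textbf{s},\textbf{t})$; without it the weighted sum could in principle exceed the occurrence count.
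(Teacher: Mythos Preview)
Your proposal is correct and follows essentially the same approach as the paper: bound the support by the occurrence count via $f_j \le 1$, bound the occurrence count of the super-pattern by $|pre_\textbf{p}|$ via anti-monotonicity, and chain the inequalities to get $fsup(\textbf{s},\textbf{t}) < minsup$. The paper's own proof is terser---it simply asserts anti-monotonicity and that ``no $f_j$ is greater than one'' without justification---so your explicit injection argument and the verification that $e^{-k(n-j)} \le 1$ only add rigor to the same line of reasoning.
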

\begin{proof}
We assume $\textbf{r}=\textbf{p}\oplus\textbf{q}$, where \textbf{p} is the prefix OPP of \textbf{r}, \textbf{q} is the suffix OPP of \textbf{r}, and \textbf{r} is the super-pattern of \textbf{p} and \textbf{q}. The number of occurrences of \textbf{r} is no greater than that of \textbf{p}, i.e., $|L_\textbf{r}|\leq  |pre_\textbf{p}|$, since the OPP mining process has the property of anti-monotonicity. 
If $|pre_\textbf{p}|$$<$\textit {minsup}, then $|L_\textbf{r}|$$<$\textit {minsup}. 
In addition, $fsup(\textbf{r},\textbf{t})=\sum_{j=1}^{n} f_j \leq |L_\textbf{r}|$, since no $f_j$ is greater than one. 
Thus, $fsup(\textbf{r},\textbf{t})$$<$\textit {minsup}, which means that \textbf{r} is not an OPF. Hence, if $|pre_\textbf{p}|$$<$\textit {minsup}, then we can prune \textbf{p} and its super-patterns.
\end {proof}

\textbf{Suffix pattern pruning strategy.} If $sufsup_\textbf{q}$$<$\textit {minsup}, then we can prune pattern \textbf{q} and its super-patterns, which means that it is impossible to generate OPFs with \textbf{q} as the suffix OPP.

\begin{theorem}\label {theorem6}
The suffix pattern pruning strategy is complete.
\end{theorem}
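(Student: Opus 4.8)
The plan is to mirror the argument for Theorem~\ref{theorem5}, but to reason directly with the forgetting support instead of the occurrence count, since the suffix pruning test is phrased in terms of $sufsup_\textbf{q}$. First I would fix an arbitrary super-pattern \textbf{r} that has \textbf{q} as its suffix OPP, writing $\textbf{r}=\textbf{p}\oplus\textbf{q}$ for some prefix \textbf{p}, and recall from Step~2 that $sufsup_\textbf{q}$ is initialized to $fsup(\textbf{q},\textbf{t})$.

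The key step is an occurrence-containment claim: every ending position of an occurrence of \textbf{r} is also an ending position of an occurrence of \textbf{q}. This follows from Definition~\ref{definition8} together with the SCF generation rules of Step~3, where a position $<$$j$$>$ can be placed into $L_\textbf{r}$ only when it already belongs to $suf_\textbf{q}$; equivalently, the last $m$ elements of any occurrence of \textbf{r} form an occurrence of its suffix OPP \textbf{q} ending at the same position. Hence $L_\textbf{r}\subseteq suf_\textbf{q}$ as sets of ending positions, which is exactly the anti-monotonicity of OPP mining. Because every forgetting value satisfies $f_j>0$ and $f_j$ depends only on the ending position $j$, summing over the smaller index set gives $fsup(\textbf{r},\textbf{t})=\sum_{j\in L_\textbf{r}} f_j \leq \sum_{j\in suf_\textbf{q}} f_j = sufsup_\textbf{q}$. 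Consequently, if $sufsup_\textbf{q}<\textit{minsup}$ then $fsup(\textbf{r},\textbf{t})<\textit{minsup}$, so \textbf{r} cannot be an OPF; since \textbf{r} was an arbitrary super-pattern having \textbf{q} as suffix, pruning \textbf{q} as a suffix discards no frequent pattern, which is precisely completeness.

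The main obstacle I anticipate is the bookkeeping around $sufsup_\textbf{q}$: Step~4 decrements $sufsup_\textbf{q}$ by $f_j$ each time an occurrence $<$$j$$>$ is consumed by an already-generated super-pattern, so at the instant the pruning test fires, $sufsup_\textbf{q}$ may be strictly smaller than the initial $fsup(\textbf{q},\textbf{t})$. I therefore need to justify that this decremented value still upper-bounds the forgetting support of every not-yet-generated super-pattern with \textbf{q} as suffix. The cleanest resolution is to observe that each occurrence $<$$j$$>$ of \textbf{q} is assigned to at most one super-pattern (its membership in $L_\textbf{r}$ versus $L_\textbf{h}$ is fixed by comparing $t_{first}$ and $t_{last}$ in Step~3), so an occurrence already removed from $suf_\textbf{q}$ belongs to a super-pattern whose support is already finalized and can never contribute to a future \textbf{r}. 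Hence the remaining $sufsup_\textbf{q}$ is a valid upper bound for any super-pattern still to be formed, and the inequality above goes through unchanged.
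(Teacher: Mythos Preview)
Your proposal is correct and follows essentially the same route as the paper: both arguments rest on the observation that the occurrence set of a super-pattern is a subset of the occurrence set of its suffix pattern, and since each $f_j>0$ the forgetting support inherits the containment as an inequality. Your treatment is in fact more careful than the paper's, which simply compares $fsup(\textbf{r},\textbf{t})$ to $fsup(\textbf{q},\textbf{t})$ and then jumps to the conclusion about $sufsup_\textbf{q}$ without addressing the decrements in Step~4; your final paragraph closes that gap by noting that each ending position $j$ of $\textbf{q}$ can be an occurrence of at most one super-pattern having $\textbf{q}$ as suffix, so positions already removed from $suf_\textbf{q}$ can never contribute to a future super-pattern and the residual $sufsup_\textbf{q}$ remains a valid upper bound.
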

\begin{proof}
The number of occurrences of a super-pattern is no greater than that of its suffix pattern, since OPP mining has the property of anti-monotonicity. Note that the occurrence set of a super-pattern is a subset of that of its suffix pattern. The support of a super-pattern is therefore no greater than that of its suffix pattern. Hence, if $sufsup_\textbf{q}$$<$\textit {minsup}, then we can prune pattern \textbf{q} and its super-patterns.
\end {proof}

Example \ref{example7} illustrates the effectiveness of the two pruning strategies.

\begin{example}\label{example7}
 Suppose we have patterns \textbf{p}=(1,3,2), \textbf{q}=(3,2,1), and \textbf{g}=(2,1,3) and $minsup=1.5$. \textbf{p} is fused with \textbf{q} and \textbf{g} to generate super-patterns. From Example \ref{example6}, we know that $pre_\textbf{p}=\{10\}$ and $sufsup_\textbf{q}=0.82$ after using $\textbf{p}\oplus\textbf{q}$ to generate two candidate patterns. 
 If we do not employ the prefix pattern pruning strategy, we need to use $\textbf{p}\oplus\textbf{g}$ to generate pattern (1,3,2,4) using GP-Fusion, and calculate the support for this pattern. However, \textbf{p} is a prefix pattern that can be pruned, since $|pre_\textbf{p}|=1$$<$\textit {minsup}. Similarly, if we do not employ the suffix pattern pruning strategy, we know from Example \ref{example3} that we need to use $\textbf{q}\oplus\textbf{q}$ to generate pattern (4,3,2,1) and calculate the support for this pattern. However, pattern \textbf{q} is a suffix pattern that can be pruned, since $sufsup_\textbf{q}=0.82$$<$\textit {minsup}. Hence, the prefix and the suffix pattern pruning strategies can further improve the mining efficiency.

\end{example}

When SCF is used to calculate the support of a pattern, the matched elements are dynamically deleted. The advantage of the SCF algorithm is that with a reduction in the number of elements in $pre_\textbf{p}$ and $suf_\textbf{q}$, the number of matches between elements will be greatly reduced, which can avoid redundant calculations. Two pruning strategies are also proposed to further prune candidate patterns. Thus, the efficiency of the method is improved.

Pseudocode for SCF is given in Algorithm \ref{Algorithm 1}.

\begin{algorithm}[htb]	
\caption{SCF}
\begin{algorithmic}[1]\label{Algorithm 1}
\REQUIRE  patterns \textbf{p} and \textbf{q}, $pre_\textbf{p}$, $suf_\textbf{q}$, forgetting values \textbf{f}, and $sup(\textbf{q},\textbf{t})$
\ENSURE  super-patterns \textbf{r} and \textbf{h} and corresponding $L_\textbf{r}$ and $L_\textbf{h}$
 \STATE {$L_\textbf{r} \leftarrow L_\textbf{h} \leftarrow \{\}$;}
 \IF {\textit {suffixop} (\textbf{p})==\textit {prefixop}(\textbf{q})}
\FOR {each occurrence $i$ in $pre_\textbf{p}$}
   \STATE {$j \leftarrow i+1$;}
\IF {$j$ in $suf_\textbf{q}$ }
\STATE {Remove $i$ and $j$ from $pre_\textbf{p}$ and $suf_\textbf{q}$, respectively, and add $j$ into $L_\textbf{r}$ or $L_\textbf{h}$;}
\STATE {$sup(\textbf{q},\textbf{t}) \leftarrow sup(\textbf{q},\textbf{t})-f_j$;}
	 \ENDIF   
\ENDFOR
    \ENDIF
\RETURN $\textbf{r}$, $\textbf{h}$, $L_\textbf{r}$, $L_\textbf{h}$
\end{algorithmic}
\end{algorithm}

\subsection{OPF-Miner}
\label{OPF-Miner}
In this section, we describe OPF-Miner, a method of mining all frequent OPFs in which GP-Fusion is applied with the maximal support priority strategy to generate candidate patterns, and the SCF algorithm with the prefix and suffix pattern pruning strategies to calculate the support with the forgetting mechanism. The main steps of OPF-Miner are as follows.

Step 1: Scan time series \textbf{t} to calculate the forgetting values and store them in array \textbf{f}.

Step 2: Calculate the matching sets and the supports with the forgetting mechanism of patterns (1,2) and (2,1) in time series \textbf{t}. If a pattern is frequent, it is stored in the frequent pattern set $F_2$  with the corresponding group label.

Step 3: Sort $F_m$ in descending order to obtain $Plist$. Select each pattern in $Plist$ to obtain its $Qlist$, and generate the super-patterns using the maximal support priority strategy. Then, use the SCF algorithm with pruning strategies to calculate the support of the super-patterns. If a pattern is frequent, then store it in set $F_{m+1}$  with the corresponding group label.

Step 4: Iterate Step 3 until no super-patterns can be generated. Finally, all OPFs $UF=F _2 \cup F_3 \cup \cdots  \cup F_m$  are obtained.

Example \ref{example8} illustrates the mining process of OPF-Miner.

\begin{example}\label{example8}
We consider the same time series \textbf{t} as in Example \ref{example2}, and set $minsup=1.5$. We mine all frequent OPFs as follows.

Step 1: Scan time series \textbf{t} to calculate the forgetting value of each position, and store it in array \textbf{f}, as shown in Table \ref{Forgetting values for time series}.

Step 2: Calculate the matching sets of (1,2) and (2,1): $L_{(1,2)}$ $=\{2,5,9\}$ and $L_{(2,1)} =\{3,4,6,7,8,10\}$. Then use array \textbf{f} to calculate their supports with the forgetting mechanism. The support of (1,2) with the label Group 1 is 1.96, and that of (2,1) with the label Group 2 is 4.28. The two patterns are stored in $F_2=\{(1,2), (2,1)\}$, since they are frequent.

Step 3: Sort $F_2$ in descending order to obtain $Plist=\{(2,1), (1,2)\}$. First, select pattern $p=(2,1)$ with label Group 2. We obtain $Qlist=\{(2,1), (1,2)\}$ according to Table \ref{Fusion results for patterns with length two}. We then select pattern $q=(2,1)$ as the first pattern in $Qlist$ to generate the super-pattern $(3,2,1)=(2,1) \oplus (2,1)$, since $|pre_{(2,1)}|=$ $|L_{(2,1)}|=6$$>$\textit {minsup} and 
$sufsup_{(2,1)} =fsup(2,1)=4.28$$>$\textit {minsup}. 
We use SCF to obtain $L_{(3,2,1)}= \{4,7,8\}$ and $fsup(3,2,1)=2.11$$>$\textit {minsup}. 
Thus, pattern (3,2,1) is a frequent OPF, which is stored in $F_3$  with the label Group 4 according to Table \ref{Fusion results for patterns with length two}. 
Now, $pre_{(2,1)}=\{4,8,10\}$, $suf_{(2,1)} =\{3,6,10\}$ and $sufsup_{(2,1)}=4.28-2.11=2.17$. 
Next, we select (1,2) as the suffix OPP. It is easy to see that (1,2) and (2,1) cannot be pruned. Thus, $(2,1) \oplus (1,2)=(2,1,3), (3,1,2)$ are generated. SCF calculates $L_{(2,1,3)} =\{5,9\}$. Thus, $fsup(2,1,3)=1.51$, which means that (2,1,3) is frequent, and it is therefore stored in $F _3$  with the label Group 3. Moreover, $fsup(3,1,2)=0$, which means that (3,1,2) is infrequent and is pruned.

Next, pattern (1,2) is selected from $Plist$. In the same way, the frequent pattern (1,3,2) with the label Group 2 can be obtained. Finally, the set of frequent patterns with length three is $F_3 =\{(3,2,1), (2,1,3), (1,3,2)\}$.

Step 4: Iterate Step 3 to generate candidate patterns with length four, as in Examples \ref{example4} and \ref{example7}. It is easy to see that all of the candidate patterns with length four are infrequent, i.e., $F_4 =\emptyset$. Hence, the full set of OPFs is $F=F_2 \cup F_3 \cup F_4=\{(1,2), (2,1), (3,2,1), (2,1,3), (1,3,2)\}$. 	
\end{example}

Pseudocode for OPF-Miner is given in Algorithm \ref{Algorithm 2}.

\begin{algorithm}[htb]	
\caption{OPF-Miner}
\begin{algorithmic}[1]\label{Algorithm 2}
\REQUIRE Time series \textbf{t} and $minsup$
\ENSURE  Frequent OPF set $F$
 \STATE {Calculate the forgetting values \textbf{f};}
 \STATE {Scan time series \textbf{t} to obtain the occurrences of patterns (1,2) and (2,1), which are stored in $L_{(1,2)}$ with the label Group 1 and $L_{(2,1)}$ with the label Group 2, \!respectively;}
 \STATE {Calculate the supports of patterns (1,2) and (2,1) according to \textbf{f}, and store the frequent OPFs in $F_2$;}
  \STATE {$m \leftarrow 2$;}
  \WHILE{$F_m <> NULL$}
 \STATE { Sort $F_m$ in descending order to obtain $Plist$};
\FOR {each pattern \textbf{p} in $Plist$}
   \STATE {$pre_\textbf{p} \leftarrow suf_\textbf{p} \leftarrow L_\textbf{p}$;}
\ENDFOR
\FOR {each pattern \textbf{p} in $Plist$}
   \STATE {Generate $Qlist$ according to \textbf{p};}
   \FOR {each pattern \textbf{q} in $Qlist$}
   \STATE {\textbf{r},\textbf{h},$L_\textbf{r}$,$L_\textbf{h} \leftarrow SCF(\textbf{p},\textbf{q},pre_\textbf{p},suf_\textbf{q},\textbf{f}, sup(\textbf{q},\textbf{t}))$;}
   \STATE {Calculate the supports of \textbf{r} and \textbf{h} according to \textbf{f}, find frequent OPFs, and store them in $F_{m+1}$ with the corresponding group label;}
\ENDFOR
\ENDFOR
\STATE {$m \leftarrow m+1$;}
\ENDWHILE
\STATE {$F=F _2 \cup F_3 \cup \cdots  \cup F_m$;}
\RETURN $F$
\end{algorithmic}
\end{algorithm}

\subsection{Time and space complexity analysis}\label {complexity}
\begin{theorem}\label {theorem8}
The time complexity of OPF-Miner is $O(L \times (n+L))$, where  $n$ and $L$ are  the length of the time series and the number of candidate patterns, respectively.
\end{theorem}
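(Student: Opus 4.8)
The plan is to decompose the total running time of OPF-Miner (Algorithm \ref{Algorithm 2}) into its two constituent tasks---candidate pattern generation and support calculation---bound each separately, and then add the bounds. First I would fix notation: let $L$ denote the total number of candidate patterns generated across all pattern lengths, and let $n$ be the length of the time series \textbf{t}. The target bound $O(L \times (n+L)) = O(L \times n + L^2)$ suggests that the support calculation should contribute the $L \times n$ term and the fusion (candidate generation) should contribute the $L^2$ term, so I would aim to confirm exactly this split.

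For the support calculation, I would argue that a single invocation of SCF (Algorithm \ref{Algorithm 1}) costs $O(n)$. The key observation is that any occurrence of a pattern is identified by a last position $j$ with $1 \leq j \leq n$ (Definition \ref{definition4}), so the occurrence set $L_\textbf{p}$, and hence the prefix set $pre_\textbf{p}$ and suffix set $suf_\textbf{q}$, each contain at most $n$ entries. SCF scans $pre_\textbf{p}$ once, and for each entry $i$ it performs a constant-time membership test for $j=i+1$ in $suf_\textbf{q}$ together with a constant-time occurrence assignment and support update (with matched elements dynamically deleted). Hence each call is $O(n)$, and since SCF is invoked once per candidate pattern, the total support-calculation cost over all $L$ patterns is $O(L \times n)$.

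For candidate generation, I would bound the fusion work. At each length level $m$, sorting $F_m$ to build $Plist$ for the maximal support priority strategy costs $O(|F_m| \log |F_m|)$, and for each pattern \textbf{p} in $Plist$ the formation of $Qlist$ and the attempted fusions require comparing \textbf{p} against its candidate partners. Because GP-Fusion (Theorem \ref{theorem3}) restricts each pattern to fuse only with patterns in two of the four groups, the number of fusion checks at level $m$ is $O(|F_m|^2 / 2)$. Summing over all levels and adding the two contributions yields $O(L \times n) + O(L^2) = O(L \times (n+L))$, as claimed.

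I expect the main obstacle to be the accounting step that aggregates the per-level fusion costs $\sum_m O(|F_m|^2)$ into the single bound $O(L^2)$. This uses $\sum_m |F_m| = L$ together with $\sum_m |F_m|^2 \leq \bigl(\sum_m |F_m|\bigr)^2 = L^2$, and it also absorbs the sorting terms $\sum_m |F_m| \log |F_m| = O(L \log L)$, which are dominated by $L^2$. A secondary technical point worth verifying carefully is that the dynamic deletion of matched elements inside SCF never causes a single call to exceed $O(n)$ work, so that the aggregate support-calculation bound $O(L \times n)$ is genuinely tight rather than merely an over-estimate.
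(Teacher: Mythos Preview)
Your proposal is correct and follows essentially the same decomposition as the paper: the paper also bounds the initialization and scanning by $O(n)$, the sorting by $O(L\log L)$, candidate generation by $O(L^2)$, and the aggregate SCF cost by $O(L\times n)$, then sums to $O(L\times(n+L))$. You are in fact more careful than the paper on the aggregation step $\sum_m |F_m|^2 \le L^2$, which the paper simply asserts without comment.
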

\begin{proof}
The time complexity of calculating the corresponding forgetting values is $O(n)$, since the length of the time series is $n$. The time complexity of Lines 2 and 3 is also $O(n)$. The time complexity of Line 6 is $O(L \times log (L))$. The time complexity of generating candidate patterns is $O(L \times L)$. The time complexity of SCF is $O(n)$. Thus, the time complexity of calculating supports of all candidate patterns is $O(L \times n)$. Hence, the time complexity of OPF-Miner is $O(n+L \times log (L)+L \times L+ L \times n)$ =$O(L \times (n+L))$.
\end {proof}

\begin{theorem}\label {theorem7}
The space complexity of OPF-Miner is $O(m \times (n+ L))$, where $m$ is the length of the maximum OPF.
\end{theorem}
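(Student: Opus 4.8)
The plan is to mirror the line-by-line accounting used in the proof of Theorem~\ref{theorem8}, but charging memory rather than running time, and to split the storage into three classes: the fixed auxiliary arrays, the per-pattern occurrence information, and the stored pattern set itself. First I would dispose of the cheap pieces: Line~1 of Algorithm~\ref{Algorithm 2} keeps the time series \textbf{t} and the forgetting values \textbf{f}, each an array of length $n$, contributing $O(n)$; the support value and group label attached to a pattern cost only $O(1)$ each.

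Next I would account for the occurrence information. For every active pattern \textbf{p}, SCF retains $L_\textbf{p}$ together with the derived sets $pre_\textbf{p}$ and $suf_\textbf{p}$ (Lines~7--9). Because each occurrence is recorded by a single last position in \textbf{t} (Definition~\ref{definition4}), every one of these sets is a subset of $\{1,\dots,n\}$ and hence has size at most $n$; thus one pattern's occurrence data costs $O(n)$. The structural observation I would then exploit is that OPF-Miner extends patterns level by level, so the occurrence sets that must coexist are those attached to the chain of sub-patterns of length up to $m$ leading to the current super-pattern, where $m$ is the length of the longest OPF. Charging $O(n)$ to each of these $O(m)$ levels gives an $O(m\times n)$ contribution.

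I would then bound the storage of the pattern set $F=F_2\cup\cdots\cup F_m$ (Line~20). There are at most $L$ candidate patterns in total, and every stored pattern is a sequence of ranks of length at most $m$, so writing them all down costs $O(m\times L)$. Summing the three classes yields $O(n)+O(m\times n)+O(m\times L)=O(m\times(n+L))$, which is the claimed bound.

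The main obstacle I anticipate is the middle step: justifying that the occurrence, prefix, and suffix sets held simultaneously aggregate to only $O(m\times n)$ rather than $O(L\times n)$. The honest subtlety is that, a priori, a single level may contain many patterns, each carrying an $O(n)$ occurrence set, so the argument must rely on the order in which OPF-Miner generates and consumes these sets, namely the dynamic deletion of matched elements from $pre_\textbf{p}$ and $suf_\textbf{q}$ noted after Example~\ref{example7}, together with the fact that a super-pattern's occurrences are drawn from, and bounded in number by, those of its length-$m$ sub-patterns, so that the number of occurrence sets that are alive at any instant is governed by the pattern depth $m$ rather than the global count $L$. Making this retention bound precise is the crux; the remaining bookkeeping is routine.
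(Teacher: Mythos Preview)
Your three-way decomposition matches the paper's, and the bookkeeping for the auxiliary arrays and for the stored pattern set is fine. The gap is in the middle piece, and it is exactly where you yourself flag uncertainty.

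Your proposed resolution---that at any instant only the occurrence sets along a single prefix chain of depth $m$ are ``alive,'' with dynamic deletion keeping the live count at $O(m)$---does not reflect how OPF-Miner actually runs. In the loop over $F_m$ (Lines~7--16 of Algorithm~\ref{Algorithm 2}), the algorithm simultaneously holds $L_\textbf{p}$, $pre_\textbf{p}$, and $suf_\textbf{p}$ for \emph{every} pattern \textbf{p} in $F_m$, not just one; otherwise the fusion over all pairs $(\textbf{p},\textbf{q})$ could not be carried out. So the number of occurrence sets coexisting at a single level is governed by $|F_m|$, which can be large, not by the depth $m$. Bounding each such set by $n$ individually would give you only $O(L\times n)$, which is too weak.

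The argument the paper uses is different and much simpler: the occurrence sets at a fixed level are pairwise \emph{disjoint}. For a fixed length $j$, each position $\langle i\rangle$ in \textbf{t} is the last index of exactly one contiguous window $(t_{i-j+1},\dots,t_i)$, and that window has a single relative order $R(\cdot)$; hence $\langle i\rangle$ can appear in $L_\textbf{p}$ for at most one length-$j$ pattern \textbf{p}. Consequently $\sum_{\textbf{p}\in F_j}|L_\textbf{p}|\le n$, so the total storage for \emph{all} occurrence sets at level $j$ is $O(n)$, regardless of how many patterns that level contains. Summing over the $m$ levels yields $O(m\times n)$. This disjointness observation is the missing idea; once you have it, no appeal to dynamic deletion or to the fusion order is needed.
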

\begin{proof}
The length of the time series is $n$. Thus, the space complexity of the corresponding forgetting values is $O(n)$. The space complexity of $L_{(1,2)}$ and $L_{(2,1)}$ is also $O(n)$. Moreover, the size of all occurrences of all patterns with length $j$ is $n$, since each position only belongs to an occurrence of a pattern. Thus, the space complexity of all occurrences of all patterns with length $j$ is $O(n)$. Therefore,  the space complexity of all occurrences of all frequent OPFs is $O(m \times n)$. Moreover, the space complexity of all candidate patterns is $O(m \times L)$.  Hence, the space complexity of OPF-Miner is $O(n+m \times n+ m \times L)$ =$O(m \times (n+L))$.
\end {proof}

\section{Experimental Results and Analysis}
\label{section:Experimental Results and Analysis}

In relation to the performance of the OPF-Miner 
algorithm, we propose the following research questions (RQs):

\begin{enumerate}[RQ1:]
\item How does the GP-Fusion strategy perform compared with the enumeration strategy and the pattern fusion method?
\item What is the performance of the maximal support priority strategy?
\item Can SCF improve the efficiency of the support calculation process?
\item Can the prefix and suffix pattern pruning strategies reduce the number of candidate patterns?
\item How do GP-Fusion, the maximal support priority strategy, and the suffix pattern pruning strategy perform, compared with the strategies in EFO-Miner?
\item How do different $minsup$ and $k$ values affect the performance of OPF-Miner?
\item  What is the effect of the forgetting mechanism on the mining results?
\end{enumerate}

To address RQ1, we propose the OPF-Enum and OPF-noGroup algorithms to verify the performance of the GP-Fusion strategy. To answer RQ2, we develop OPF-noPriority and OPF-minPriority to explore the effect of the maximal support priority strategy. In response to RQ3, we propose Mat-OPF to verify the efficiency of SCF. To address RQ4, we develop OPF-Same, OPF-noPre, OPF-noSuf, and OPF-noPrune to validate the performance of the pruning strategies. For RQ5, we propose EFO-OPF to validate the overall effect of GP-Fusion, the maximal support priority strategy, and the suffix pattern pruning strategy.  To answer RQ6, we verify the running performance with different values of $minsup$ and $k$. For RQ7, we consider OPP-Miner and EFO-Miner to investigate the effect of the forgetting mechanism. 


\subsection{Datasets and baseline algorithms}
\label{Datasets and baseline algorithms}

We used the following nine real stock datasets. Moreover, we set $k$=1/$n$ for all the experiments, where $n$ is the length of the time series. DB1 is an Amazon stock price dataset, which can be downloaded from https://tianchi.aliyun.com/. DB2, DB3, DB4, and DB5 are daily stock indexes, which can be downloaded from https://www.yahoo.com/. DB6, DB7, and DB8 are the daily closing prices of three American stocks, which can be downloaded from https://www.datafountain.cn/. DB9 contains the daily closing prices of 400 A-shares, which can be downloaded from https://tushare.pro/. Table \ref{Description of datasets} provides a detailed description of each dataset.

\begin{table} [h]
\centering
    \scriptsize
       \caption{Description of datasets}
    \begin{tabular}{lccc}
        \toprule
         \textbf{Name}&\textbf{Dataset} &\textbf{Length} &\textbf{Number of sequences}\\
        \midrule
       DB1&Amazon stock&5,842&1\\
       DB2&	Russell 2000&	8,141&	1\\
       DB3&	Nasdaq&	12,279&	1\\
       DB4&	S\&P500	&23,046	&1\\
DB5	&NYSE	&60,000	&1\\
DB6	&Cl.US	&10,305	&1\\
DB7	&Hpq.US	&12,075	&1\\
DB8	&Ge.US	&14,058	&1\\
DB9	&A-shares	&2,180	&400\\

        \bottomrule
    \end{tabular}
	\label{Description of datasets}
\end{table}

All the experiments were conducted on a computer with an AMD Ryzen 7 5800H, 3.2GHz CPU, 16GB RAM, and the Windows 11 64-bit operating system. The compilation environment was IntelliJ IDEA 2022.2.

To investigate the performance of OPF-Miner, we selected 12 competitive algorithms, and all data and algorithms can be downloaded from https://github.com/wuc567/Pattern-Mining/tree/master/OPF-Miner.

\begin{enumerate} [1.]

\item  OPF-Enum and OPF-noGroup. To verify the performance of the GP-Fusion strategy, we developed OPF-Enum and OPF-noGroup, which apply the enumeration strategy and the pattern fusion method to generate candidate patterns, respectively.


\item OPF-noPriority and OPF-minPriority. To validate the performance of maximal support priority strategy, we developed OPF-noPriority and OPF-minPriority. OPF-noPriority does not use the maximal support priority strategy, and OPF-minPriority applies a minimal support priority strategy, which means the frequent patterns are sorted in ascending order.

\item   Mat-OPF \cite {OPPminer}. To evaluate the efficiency of SCF, we devised Mat-OPF, which uses OPP matching to calculate the support of candidate patterns. 
	
\item OPF-Same, OPF-noPre, OPF-noSuf, and OPF-noPrune. To investigate the effects of the prefix and suffix pattern pruning strategies, we designed OPF-Same, OPF-noPre, OPF-noSuf, and OPF-noPrune. OPF-Same adopts the same suffix pattern pruning strategy as the prefix pattern pruning strategy, i.e., if $|suf_\textbf{q}|< minsup$, then \textbf{q} is pruned to avoid fusion with other patterns. OPF-noPre does not use a prefix pattern pruning strategy, OPF-noSuf does not use a suffix pattern pruning strategy, and OPF-noPrune does not apply either the prefix or suffix pattern pruning strategies.

\item To evaluate the overall effect of GP-Fusion, the maximal support priority strategy, and the suffix pattern pruning strategy, we developed EFO-OPF, which uses the pattern fusion strategy for candidate pattern generation.

\item  OPP-Miner \cite {OPPminer} and EFO-Miner \cite {OPRminer}. To validate the performance of the forgetting mechanism, we selected OPP-Miner and EFO-Miner, which mine all frequent OPPs without the forgetting mechanism.
\end{enumerate}

\subsection{Performance of OPF-Miner}
\label{Performance of OPF-Miner}

To investigate the performance of OPF-Miner, we considered 10 competitive algorithms: OPF-Enum, OPF-noGroup, OPF-noPriority, OPF-minPriority, Mat-OPF, OPF-Same, OPF-noPre, OPF-noSuf, OPF-noPrune, and EFO-OPF. 

We conducted experiments on DB1–DB8, and set the value of $minsup=4$. Since all the competitive algorithms were complete, the numbers of OPFs mined were the same, i.e., 623, 914, 1,359, 2,259, 28,201, 718, 837, and 771 OPFs were mined from DB1–DB8, respectively. Figs. \ref{Comparison of running time on SDB1–SDB8}-\ref{Comparison of numbers of support calculations} show comparisons of the running time, the memory consumption, the number of candidate patterns, the number of pattern fusions, and the number of support calculations, respectively. Note that Fig. \ref{Comparison of numbers of pattern fusions} does not show the results for OPF-Enum, since it does not use the prefix and suffix OPPs of sub-patterns to generate candidate patterns. Fig. \ref{Comparison of numbers of support calculations} does not show results for Mat-OPF, since it does not use the matching results of the sub-patterns to calculate the support.

\begin{figure}
    \centering
    \includegraphics[width=0.95\linewidth]{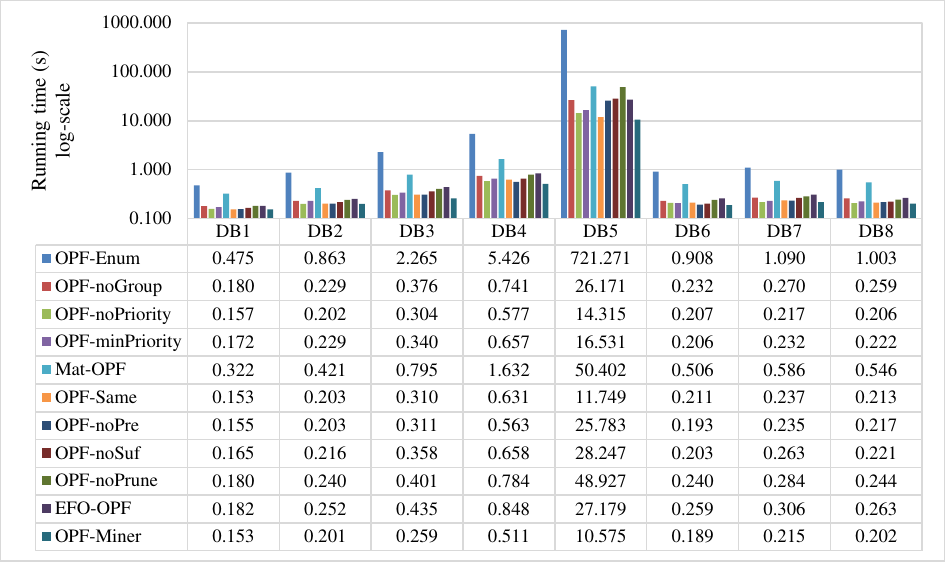}
    \caption{ Comparison of running time on DB1–DB8}
    \label{Comparison of running time on SDB1–SDB8}
\end{figure}


\begin{figure}
    \centering
   \includegraphics[width=0.95\linewidth]{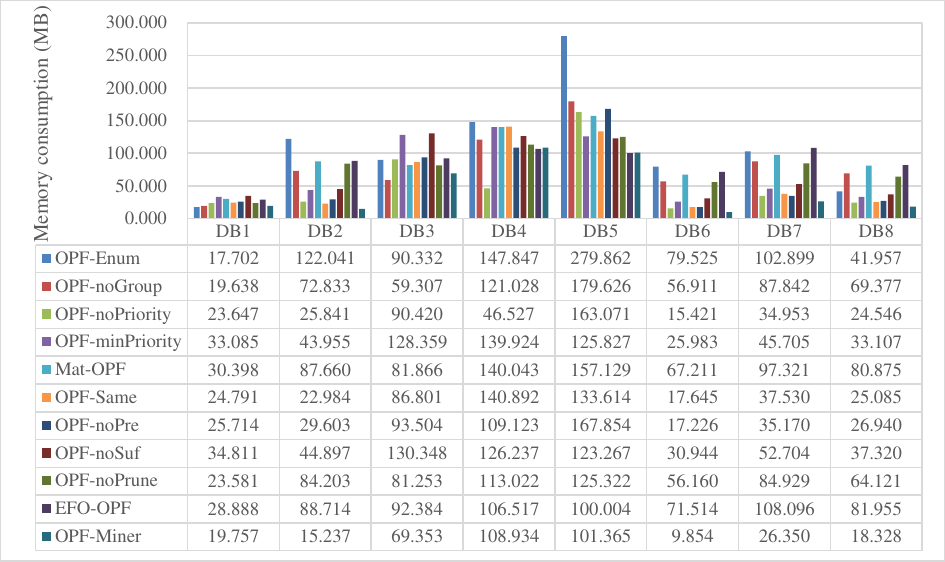}
    \caption{Comparison of memory consumption on DB1–DB8}
    \label{Comparison of memory consumption on DB1–DB8}
\end{figure}


\begin{figure}
    \centering
    \includegraphics[width=0.95\linewidth]{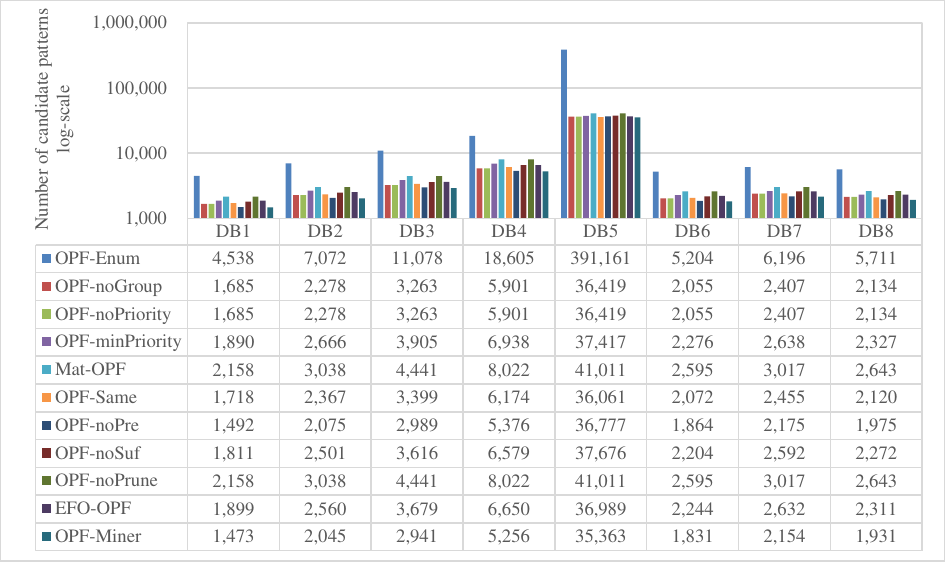}
    \caption{Comparison of numbers of candidate patterns on DB1–DB8}
    \label{Comparison of numbers of candidate patterns on DB1–DB8}
\end{figure}


\begin{figure}
    \centering
\includegraphics[width=0.95\linewidth]{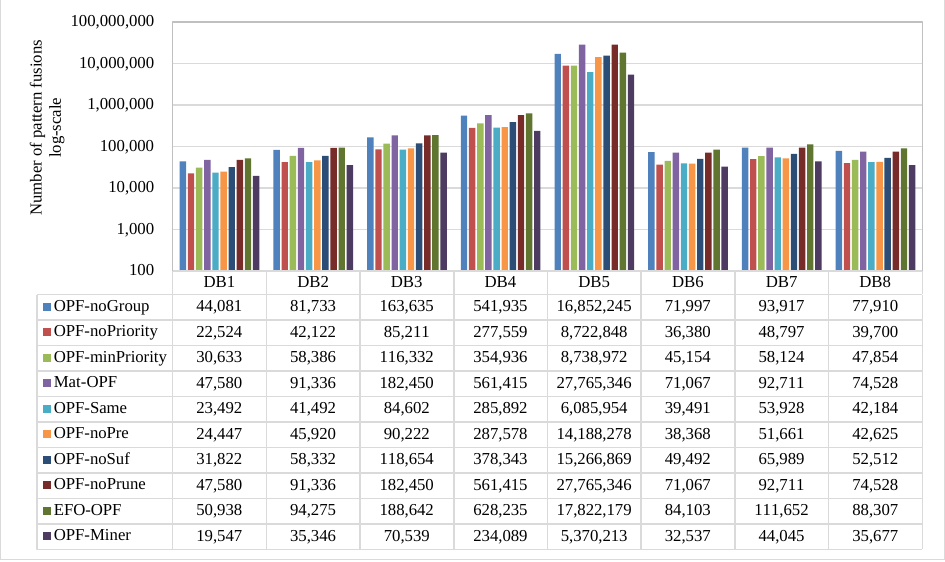}
    \caption{Comparison of numbers of pattern fusions on DB1–DB8}
    \label{Comparison of numbers of pattern fusions}
\end{figure}


\begin{figure}
    \centering
\includegraphics[width=0.95\linewidth]{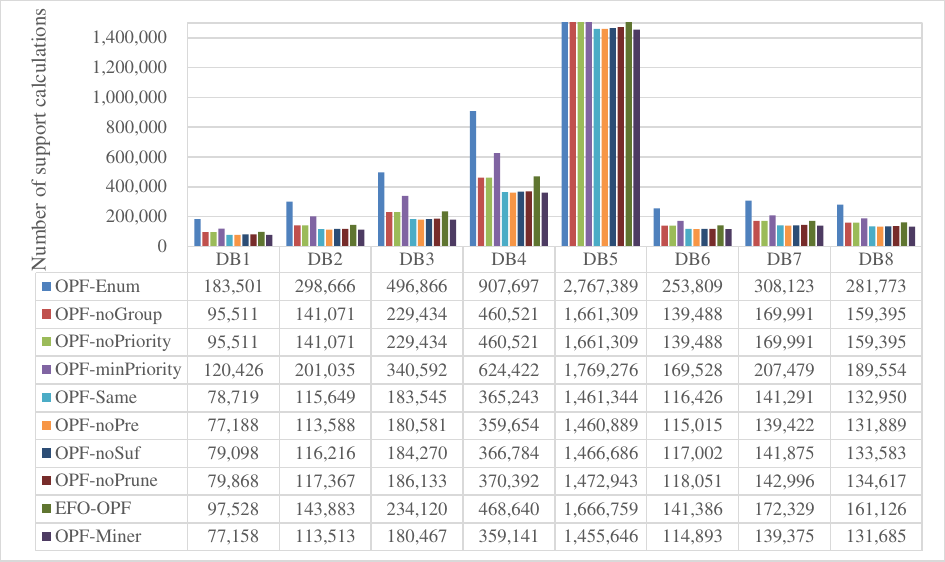}
    \caption{Comparison of numbers of support calculations on DB1–DB8}
    \label{Comparison of numbers of support calculations}
\end{figure}

These results give rise to the following observations.

\begin{enumerate} [1.]
\item OPF-Miner outperforms OPF-Enum, OPF-noGroup, OPF-noPriority, and OPF-minPriority, since OPF-Miner takes less time than the four competitive algorithms. For example, on DB2, Fig. \ref{Comparison of running time on SDB1–SDB8} shows that OPF-Miner takes 0.201s, while OPF-Enum, OPF-noGroup, OPF-noPriority, and OPF-minPriority take 0.863, 0.229, 0.202, and 0.229s, respectively. Moreover, Fig. \ref{Comparison of memory consumption on DB1–DB8} shows that OPF-Miner consumes 15.237MB of memory, while OPF-Enum, OPF-noGroup, OPF-noPriority, and OPF-minPriority consume 122.041, 72.833, 25.841, and 43.955MB, The reasons for this are as follows. In the process of generating candidate patterns, OPF-Miner adopts GP-Fusion with the maximal support priority strategy, which can reduce the number of candidate patterns and pattern fusions, meaning that it can effectively improve the efficiency of pattern fusion. OPF-Enum and OPF-noGroup adopt the enumeration and pattern fusion methods, respectively. Although OPF-noPriority and OPF-minPriority both apply the GP-Fusion strategy, OPF-noPriority does not use the maximal support priority strategy, and OPF-minPriority adopts a minimal support priority strategy. Thus, Fig. \ref{Comparison of numbers of candidate patterns on DB1–DB8} shows that OPF-Miner generates 2,045 candidate patterns, while the four competitive algorithms generate 7,072, 2,278, 2,278, and 2,666 candidate patterns on DB2, respectively. Fig. \ref{Comparison of numbers of pattern fusions} shows that OPF-Miner conducts 35,346 pattern fusions, much fewer than OPF-noGroup, OPF-noPriority, and OPF-minPriority. From Fig. \ref{Comparison of numbers of support calculations}, we know that OPF-Miner conducts 113,513 support calculations, whereas the other algorithms conduct 298,666, 141,071, 141,071, and 201,035, respectively. Hence, the performance of OPF-Miner is superior to that of OPF-Enum, OPF-noGroup, OPF-noPriority, and OPF-minPriority, which demonstrates the superiority of the GP-Fusion and maximal support priority strategies.

\item The performance of OPF-Miner is superior to that of Mat-OPF, since OPF-Miner runs faster and uses less memory than Mat-OPF. For example, on DB4, Fig. \ref{Comparison of running time on SDB1–SDB8} shows that OPF-Miner takes 0.511s, while Mat-OPF takes 1.632s. Fig. \ref{Comparison of memory consumption on DB1–DB8}  shows that OPF-Miner consumes 108.934MB of memory, while Mat-OPF consumes 140.043MB. The reasons for this are as follows. OPF-Miner employs SCF with pruning strategies to calculate the support, fully utilizing the matching results of sub-patterns and pruning the infrequent patterns, meaning that the efficiency is improved. However, Mat-OPF adopts OPP matching, which involves scanning the database from beginning to end for each support calculation. This generates numerous redundant candidate patterns without pruning strategies. Fig. \ref{Comparison of numbers of candidate patterns on DB1–DB8} shows that Mat-OPF generates 8,022 candidate patterns, more than the 5,256 generated by OPF-Miner. Fig. \ref{Comparison of numbers of pattern fusions} shows that Mat-OPF conducts 561,415 pattern fusions, more than twice the value of 234,089 for OPF-Miner. Hence, OPF-Miner gives better performance than Mat-OPF, which demonstrates the efficiency of SCF with two pruning strategies.

\item OPF-Miner gives better performance than OPF-Same, OPF-noPre, OPF-noSuf, and OPF-noPrune. For example, on DB6, Fig. \ref{Comparison of running time on SDB1–SDB8} shows that OPF-Miner takes 0.189s, while OPF-Same, OPF-noPre, OPF-noSuf, and OPF-noPrune take 0.211, 0.193, 0.203, and 0.240s, respectively. Fig. \ref{Comparison of memory consumption on DB1–DB8} shows that OPF-Miner consumes 9.854MB of memory, while the other four algorithms consume 17.645, 17.226, 30.944, and 56.160MB, respectively. The reasons for this are as follows. OPF-Miner adopts prefix and suffix pattern pruning strategies, which can more effectively prune candidate patterns. OPF-Same adopts the same pruning strategy for both suffix and prefix patterns, whereas OPF-noPre, OPF-noSuf, and OPF-noPrune do not employ the prefix pruning strategy, the suffix pruning strategy, and the prefix and suffix pruning strategies, respectively, which means that they generate more candidate patterns. Fig. \ref{Comparison of numbers of candidate patterns on DB1–DB8} shows that OPF-Miner only generates 1,831 candidate patterns, a lower total than the other algorithms. Fig. \ref{Comparison of numbers of pattern fusions} also shows that OPF-Miner conducts 32,537 pattern fusions, fewer than the other algorithms, while from Fig. \ref{Comparison of numbers of support calculations}, we see that OPF-Miner conducts 114,893 support calculations, which is also fewer than the other algorithms. Hence, OPF-Miner outperforms OPF-Same, OPF-noPre, OPF-noSuf, and OPF-noPrune, a finding that validates the effectiveness of prefix and suffix pattern pruning strategies.

\item OPF-Miner outperforms EFO-OPF. For example, on DB8, Fig. \ref{Comparison of running time on SDB1–SDB8}  shows that OPF-Miner takes 0.202s, while EFO-OPF takes 0.263s. From Fig. \ref{Comparison of memory consumption on DB1–DB8} , we know that OPF-Miner consumes 18.328MB, while EFO-OPF consumes 81.955MB. The reason for this is that unlike EFO-OPF, OPF-Miner adopts the GP-Fusion strategy, the maximal support priority strategy, and suffix pattern pruning strategy, which reduce the number of redundant calculations. Hence, OPF-Miner provides better performance than EFO-OPF.
\end{enumerate} 

\subsection{Scalability}
\label{subsection: Scalability}

To investigate the scalability of OPF-Miner, we used DB8 as the experiment dataset and set $minsup=9$. Moreover, we created Ge.US\_1, Ge.US\_2, Ge.US\_3, Ge.US\_4, Ge.US\_5, and Ge.US\_6, which are one, two, three, four, five, and six times the size of Ge.US, respectively. Fig. \ref{Comparison of running time for different database size}  shows the comparison of the running time for different dataset size.

\begin{figure}
   \centering
\includegraphics[width=0.9\linewidth]{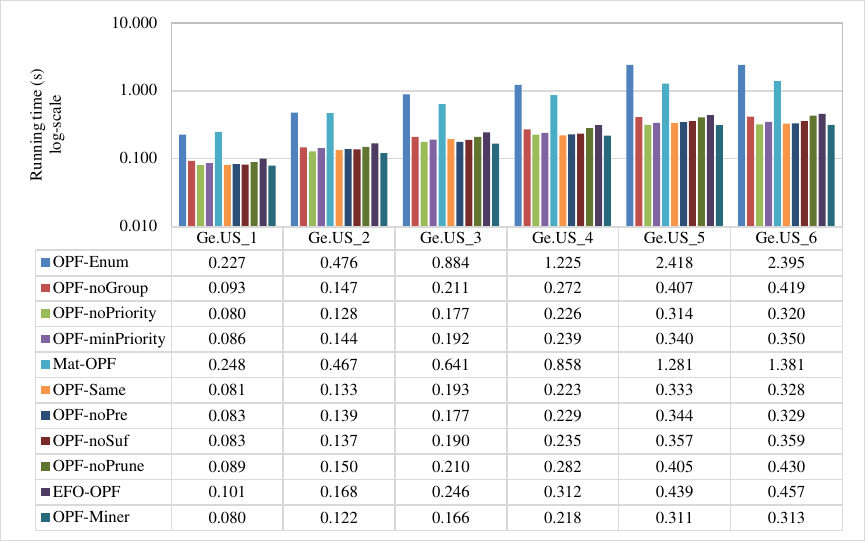}
    \caption{Comparison of running time for different dataset size}
    \label{Comparison of running time for different database size}
\end{figure}

The results give rise to the following observations.

From Fig. \ref{Comparison of running time for different database size}, we can see that the running time of OPF-Miner shows linear growth with the increase of the dataset size. For example, the size of Ge.US\_4 is four times that of Ge.US\_1. OPF-Miner takes 0.080s on Ge.US\_1 and 0.218s on Ge.US\_4, giving 0.218/0.080=2.72. Thus, the growth rate of running time of OPF-Miner is slightly lower than the increase of the dataset size.
This phenomenon can be found on all other datasets. More importantly, we can see that OPF-Miner runs almost at least three times faster than OPF-Enum and Mat-OPF. Therefore, OPF-Miner provides a better scalability than other competitive algorithms. Hence, we draw the conclusion that OPF-Miner has strong scalability, since the mining performance does not degrade as the dataset size increases.

\subsection{Influence of $minsup$}
\label{subsection: Influence of minsup}

To assess the effects of varying $minsup$, we considered OPF-Enum, OPF-noGroup, OPF-noPriority, OPF-minPriority, Mat-OPF, OPF-Same, OPF-noPre, OPF-noSuf, OPF-noPrune, and EFO-OPF as competitive algorithms. Experiments were conducted on DB5 with values of $minusp=$4, 6, 8, 10, and 12. Since all of these algorithms are complete, they mined 28,201, 17,407, 12,083, 9,099, and 7,177 OPFs for these values of $minsup$, respectively. Comparisons of the running time, and the numbers of candidate patterns, pattern fusions, and support calculations are shown in Figs. \ref{Comparison of running time for different values of minsup}–\ref{Comparison of number of support calculations for different values of minsup}, respectively.

\begin{figure}
    \centering  \includegraphics[width=0.95\linewidth]{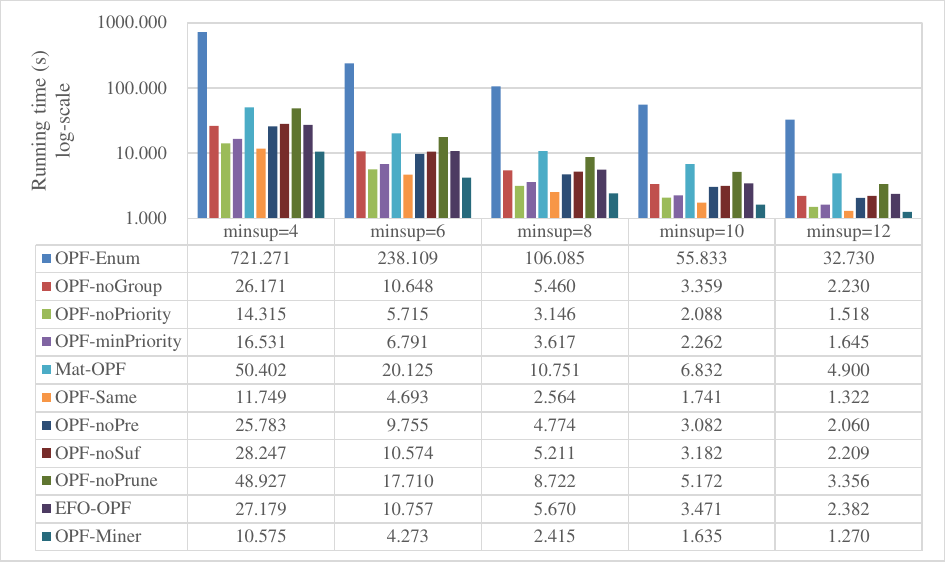}
\caption{Comparison of running time for different values of $minsup$}
    \label{Comparison of running time for different values of minsup}
\end{figure}

\begin{figure}
    \centering
    \includegraphics[width=0.95\linewidth]{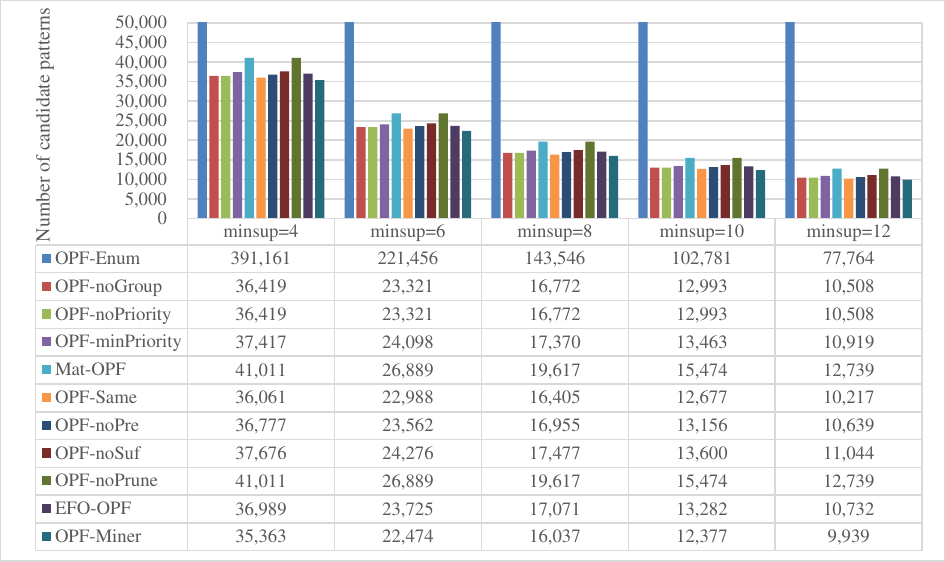}
    \caption{ Comparison of numbers of candidate patterns for different values of $minsup$}
    \label{ Comparison of number of candidate patterns for different values of minsup}
\end{figure}


\begin{figure}
    \centering
    \includegraphics[width=0.95\linewidth]{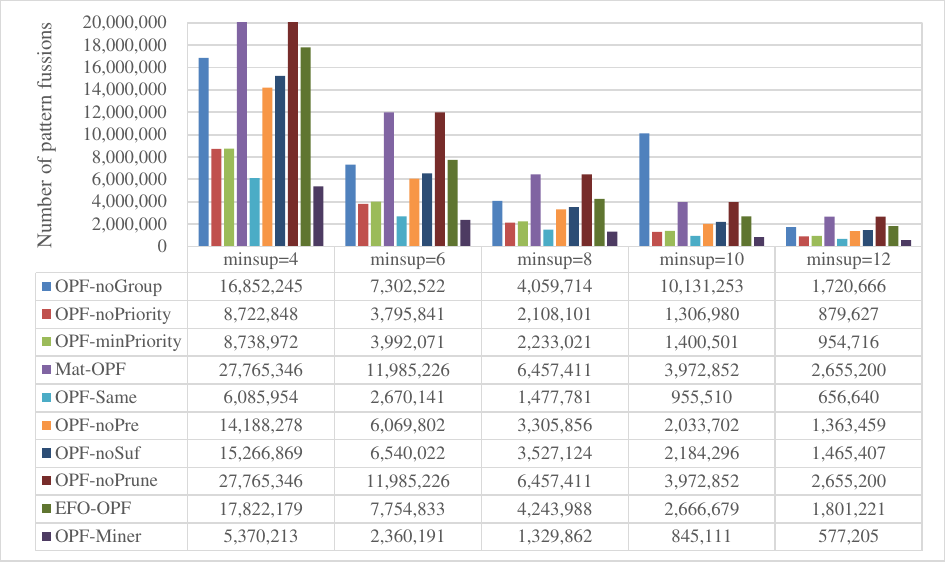}
    \caption{ Comparison of numbers of pattern fusions for different values of $minsup$}
    \label{Comparison of numbers of pattern fusions for different values of minsup}
\end{figure}


\begin{figure}
    \centering
    \includegraphics[width=0.95\linewidth]{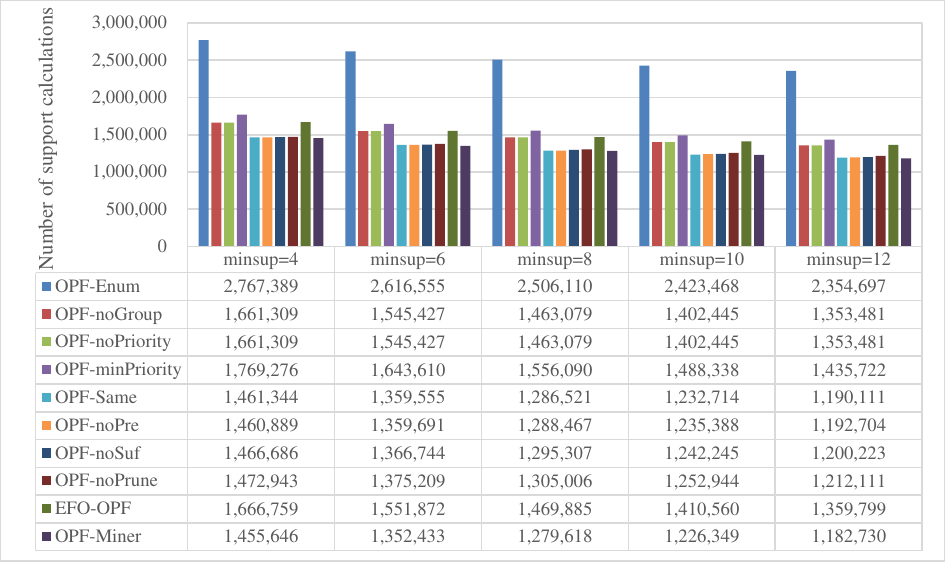}
    \caption{Comparison of numbers of support calculations for different values of $minsup$}
    \label{Comparison of number of support calculations for different values of minsup}
\end{figure}

The following observations can be made from these results.

As $minsup$ increases, there are decreases in the runtime and the numbers of candidate patterns, pattern fusions, and support calculations. For example, for values of $minsup$=10 and $minsup$=12, OPF-Miner takes 1.635 and 1.270s, generates 12,377 and 9,939 candidate patterns, conducts 845,111 and 577,205 pattern fusions, and performs 1,226,349 and 1,182,730 support calculations, respectively. This phenomenon can also be found for the other competitive algorithms. The reasons for this are as follows. With an increase in $minsup$, fewer OPFs can be found, meaning that the algorithm conducts fewer support calculations and fewer pattern fusions. Hence, the algorithm runs faster. More importantly, OPF-Miner outperforms the other competitive algorithms for all values of $minsup$, a finding that supports the results in Section \ref{Performance of OPF-Miner}.

\subsection{Influence of $k$}
To validate the infuence of different $k$, we considered OPF-Enum, OPF-noGroup, OPF-noPriority, OPF-minPriority, Mat-OPF, OPF-Same, OPF-noPre, OPF-noSuf, OPF-noPrune, and EFO-OPF as competitive algorithms. We conducted experiments on DB5 and set $minsup=15$ with values of $k$=$1/7n$, $1/5n$, $1/3n$, $1/n$, $3/n$, $5/n$, and  $7/n$, and there are 8,631, 8,375, 7,744, 5,366, 2,328, 1,352, and 811 OPFs. Figs. \ref{Comparison of running time for different k values}-\ref{Comparison of numbers of support calculations for different k values} show the comparisons of the running time, the numbers of candidate patterns, pattern fusions, and support calculations,respectively.

\begin{figure}
    \centering  \includegraphics[width=0.95\linewidth]{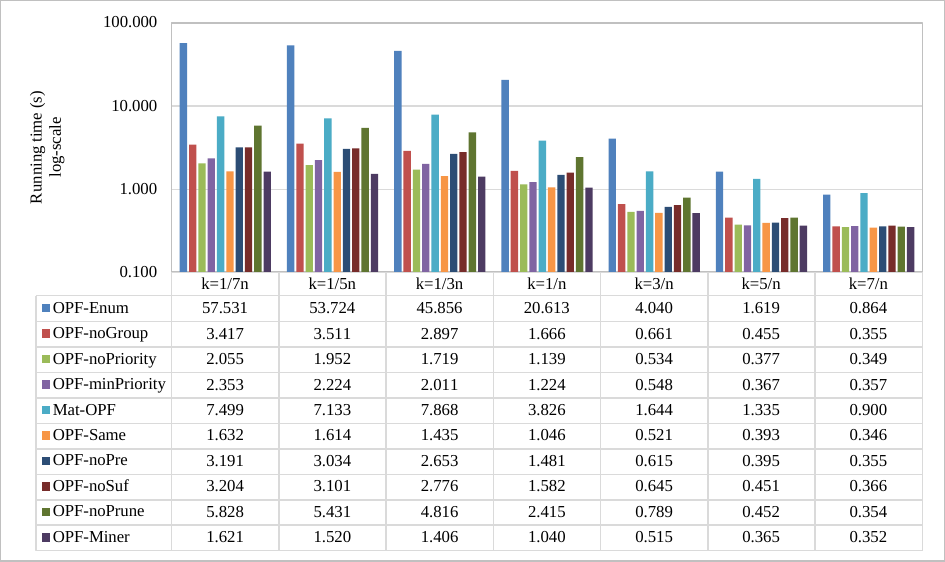}
\caption{Comparison of running time for different values of $k$}
    \label{Comparison of running time for different k values}
\end{figure}

\begin{figure}
    \centering  \includegraphics[width=0.95\linewidth]{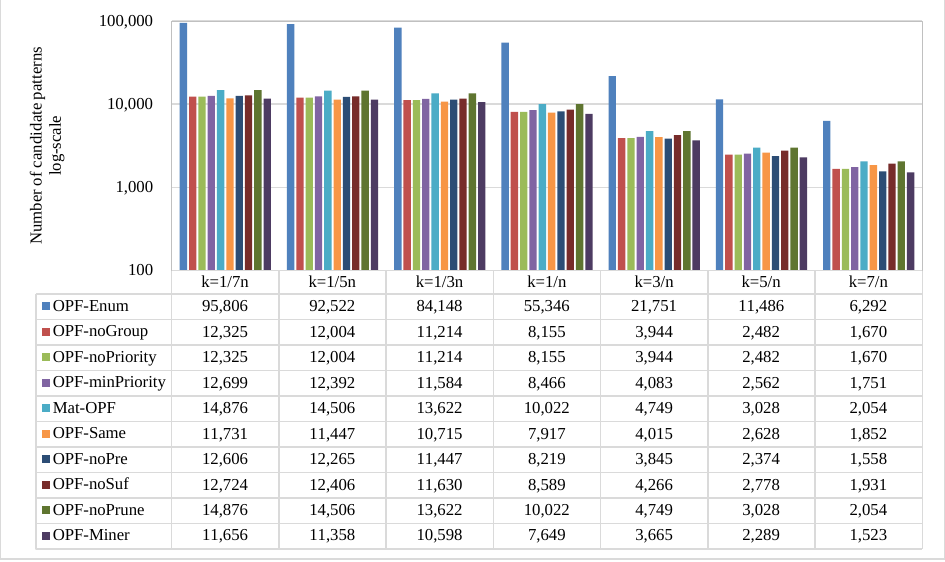}
\caption{Comparison of numbers of candidate patterns for different values of $k$}
    \label{Comparison of numbers of candidate patterns for different k values}
\end{figure}
\begin{figure}
    \centering  \includegraphics[width=0.95\linewidth]{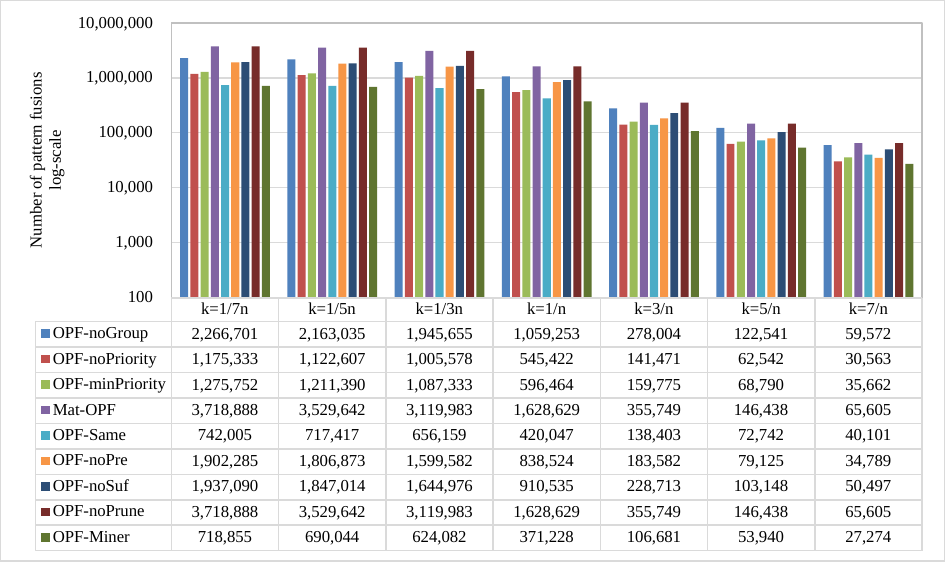}
\caption{Comparison of numbers of pattern fusions for different values of $k$}
    \label{Comparison of numbers of pattern fusions for different k values}
\end{figure}

\begin{figure}
    \centering  \includegraphics[width=0.95\linewidth]{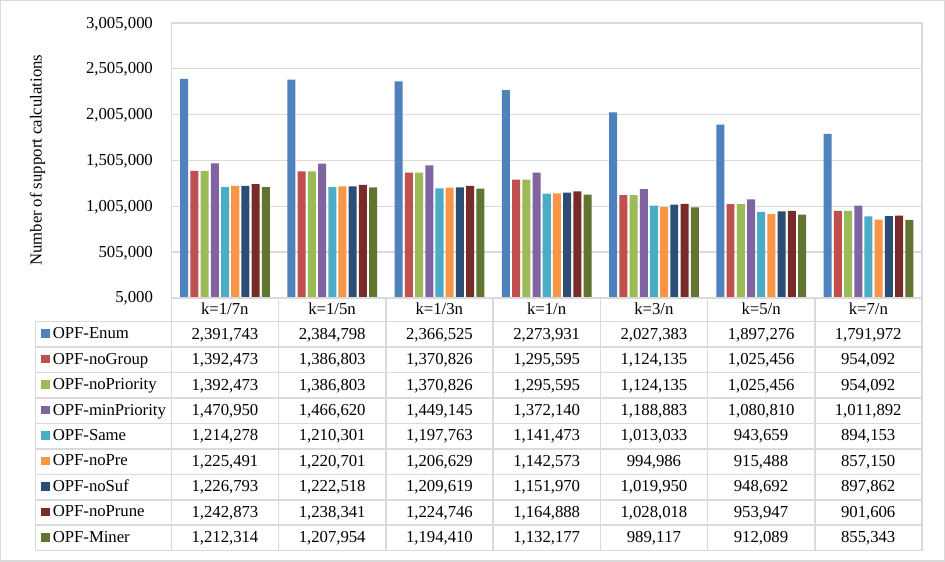}
\caption{Comparison of numbers of support calculations for different values of $k$}
    \label{Comparison of numbers of support calculations for different k values}
\end{figure}

The results give rise to the following observations.

As $k$ increases, there are decreases in the running time, the numbers of candidate patterns, pattern fusions, and support calculations. For example, for $k$=1/$n$ and $k$=3/$n$, OPF-Miner takes 1.040s and 0.515s, generates 7,649 and 3,665 candidate patterns, conducts 371,228 and 106,681 pattern fusions, and performs 1,132,177 and 989,117 support calculations, respectively. This phenomenon can also be found on other competitive algorithms. The reasons are as follows. According to Definition \ref{definition5}, $k$ is the forgetting factor. The larger $k$, the smaller $f_j$, and the smaller $fsup(\textbf{p},\textbf{t})$. In other words, the larger the $k$ value, the smaller the support of the pattern. Therefore, with the same value of $minsup$, fewer OPFs can be found. Thus, with the increase of $k$, the running time, the numbers of candidate patterns, pattern fusions, and support calculations all decrease. More importantly, OPF-Miner outperforms other competitive algorithms for all values of $k$, which supports the results in Section \ref{Performance of OPF-Miner}.

\subsection{Case Study}
To validate the performance of OPF mining in terms of time series feature extraction, we conducted a clustering experiment on DB9 and set $minsup$=50. Silhouette coefficient ($SC$) \cite {sc} and Calinski-Harabasz index ($CHI$) \cite {chi} were used to evaluate the clustering performance.  $SC$ for the whole sample is calculated using Equation \ref{equation2}:

  \begin{equation}
     SC=\frac{\sum_{i=1}^{N} s_i}{N}
     \label{equation2}
    \end{equation}

where $s_i$ is $SC$ of a single sample, and is defined as
 \begin{equation}
     s=\frac{b-a}{max(a,b)}
     \label{equation3}
    \end{equation}

where $a$ represents the average distance between a given sample and the other samples in the same cluster, and $b$ represents the average distance between a given sample and the samples in other clusters.

$CHI $ is calculated using Equation \ref{equation4}:
       \begin{equation}
     CHI=\frac{Tr(B_k)(N-K)}{Tr(W_k)(K-1)}
     \label{equation4}
    \end{equation}               
where $N$ is the number of samples, $K$ is the number of categories, $B_k$  is the inter-class covariance matrix, and $W_k$ is the intra-class covariance matrix.

For $SC$, the closer the value to one the better, whereas for  $CHI$, a larger value is better. To select a suitable forgetting factor $k$, we set the parameter for K-means to $K$=5 and $k$=1/6$n$, 1/4$n$, 1/2$n$, 1/$n$, 2/$n$, 4/$n$, and 6/$n$. The results are shown in Table \ref {differentk}.

\begin{table}
\centering
    \tiny
       \caption{Comparison of $SC$ \& $CHI$ on DB9 for different $k$}
    \begin{tabular}{lccccccc}
        \toprule
\textbf{}&$k$=1/6$n$& $k$=1/4$n$&$k$=1/2$n$ &$k$=1/$n$&$k$=2/$n$&$k$=4/$n$&$k$=6/$n$\\
        \midrule
$SC$&0.19&0.21 &	0.25& \textbf{0.79}	&0.75&0.69&	0.41 \\
$CHI$&71&	82 &92 &\textbf{760}&	580&	530&310 \\

        \bottomrule
    \end{tabular}
	\label{differentk}
\end{table}

From Table \ref {differentk}, we know that when $k$=1/$n$, we get the best clustering performance. Hence, in this section, we set $k$=1/$n$. Moreover, we set the parameter for K-means to $K=$3, 4, 5, 6, 7, 8, 9, and 10. We selected two competitive algorithms, OPP-Miner \cite {OPPminer} and EFO-Miner \cite {OPRminer}. We used OPP-Miner and EFO-Miner to mine all frequent OPPs, and OPF-Miner to mine OPFs, with a value of $minsup$=50. The original dataset is referred to in the following as the raw data, denoted as Raw. Figs. \ref{SC} and \ref{CHI} show the clustering results for $SC$ and $CHI$, respectively.

\begin{figure}
    \centering
    \includegraphics[width=0.95\linewidth]{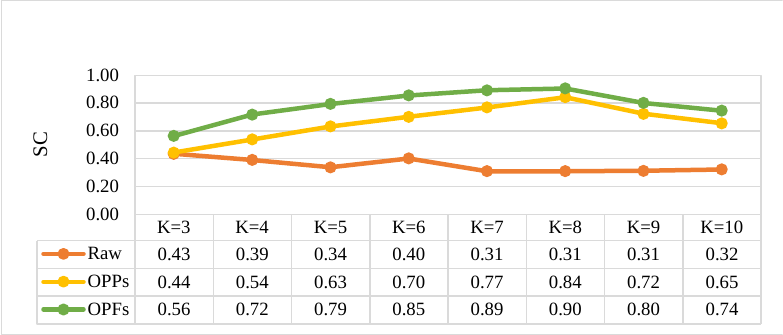}
    \caption{Comparison of $SC$ on DB9}
    \label{SC}
\end{figure}


\begin{figure}
    \centering
    \includegraphics[width=0.95\linewidth]{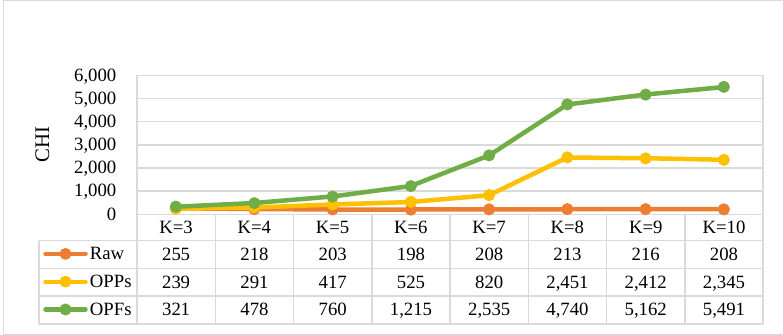}
    \caption{Comparison of \textit{CHI} on DB9 }
    \label{CHI}
\end{figure}

The results give rise to the following observations.
\begin{enumerate} [1.]
\item The clustering performance using the raw data was the worst, and the OPPs and OPFs gave better performance. This indicates that OPP-Miner, EFO-Miner, and OPF-Miner can effectively extract the critical information from the original time series data. 
For example, from Fig. \ref{SC}, we see that $SC$ for the raw data is 0.34 for $K$=5, while the value for the OPPs and OPFs are 0.63 and 0.79, respectively. Similarly, Fig. \ref{CHI} shows that $CHI$ for the raw data, OPPs and OPFs are 203, 417, and 760, respectively. Hence, the OPPs and OPFs are superior to the raw data. The reason for this is that the raw data contain too much redundant information, which affects the clustering performance, whereas OPP-Miner, EFO-Miner and OPF-Miner can discover useful information from the time series. Hence, OPP-Miner, EFO-Miner and OPF-Miner can effectively be used for feature extraction for the clustering task.

\item OPF-Miner outperforms OPP-Miner and EFO-Miner on feature extraction, i.e., the clustering performance for the OPFs is better than for the OPPs. For example, Figs. \ref{SC} and \ref{CHI} show that the values of $SC$ and $CHI$ for the OPPs are 0.84 and 2,451 for $K$=8, respectively, whereas the values for the OPFs are 0.90 and 4,740, respectively. This is because the OPFs are frequent OPPs calculated with the forgetting mechanism, which assigns different levels of importance to data at different time in the series, whereas the OPPs are based on the assumption that the data at different time are equally important. Hence, we can obtain better clustering performance using OPFs, which demonstrates the effectiveness of the forgetting mechanism.

\end{enumerate}

\section{CONCLUSION}
\label{section:CONCLUSION}
To enable mining of frequent OPPs from data with different levels of importance at different time in a time series, this study has investigated the mining of frequent OPPs with the forgetting mechanism (called OPFs), and has proposed an algorithm called OPF-Miner which performs two tasks: candidate pattern generation and support calculation. For the candidate pattern generation task, we have proposed a method called GP-Fusion that can reduce the number of redundant calculations caused by the pattern fusion strategy. To reduce the number of pattern fusions further, we propose a maximal support priority strategy that prioritizes fusion for patterns with high support. For the support calculation task, we propose an algorithm called SCF that employs prefix and suffix pattern pruning strategies, allowing us to calculate the support of super-patterns based on the occurrence positions of sub-patterns and to prune the redundant patterns effectively. We have also proved the completeness of these pruning strategies. To validate the performance of our approach, nine datasets and 12 competitive algorithms were selected. The experimental results indicated that OPF-Miner provides better performance than other competitive algorithms. More importantly, the features extracted by OPF-Miner can improve the clustering performance for time series.

This paper proposes the OPF-Miner algorithm, which has better performance. Moreover, we have validated that $minsup$ has a greater impact on OPF-Miner performance. However, without sufficient prior knowledge, it is difficult for users to set a more reasonable $minsup$. In the future, we will focus on the study of the top-$k$ order-preserving pattern mining with forgetting mechanism which can discover the top-$k$ OPFs with no need to set $minsup$ in practical applications. Moreover, we know that the forgetting factor can affect the OPF-Miner performance. It is valuable to investigate setting a reasonable forgetting factor without prior knowledge in certain applications.

\section*{Acknowledgement}
This work was partly supported by National Natural Science Foundation of China (62372154, U21A20513, 62076154, 62120106008).

{\footnotesize
}

\begin{IEEEbiography}[{\includegraphics[width=1.in,height=1.25in,clip,keepaspectratio]{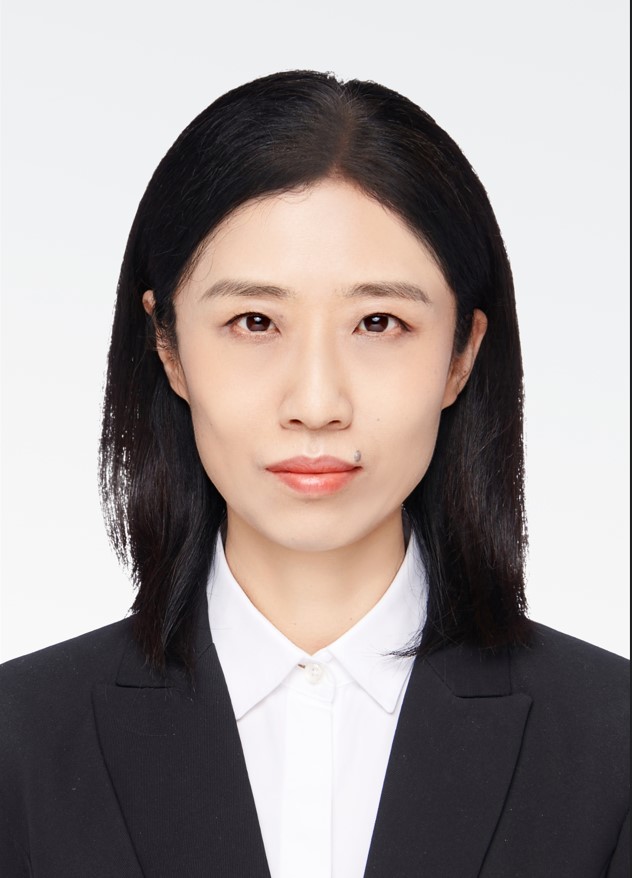}}]{Yan Li} received the Ph.D. degree in Management Science and Engineering from Tianjin University, Tianjin, China. She is an associate professor with Hebei University of Technology. She has authored or coauthored more than 20 research papers in some journals, such as IEEE Transactions on Knowledge and Data Engineering, IEEE Transactions on Cybernetics, ACM Transactions on Knowledge Discovery from Data, Information Sciences, and Applied Intelligence Her current research interests include data mining and supply chain management.
\end{IEEEbiography}


\begin{IEEEbiography}[{\includegraphics[width=1in,height=1.25in,clip,keepaspectratio]{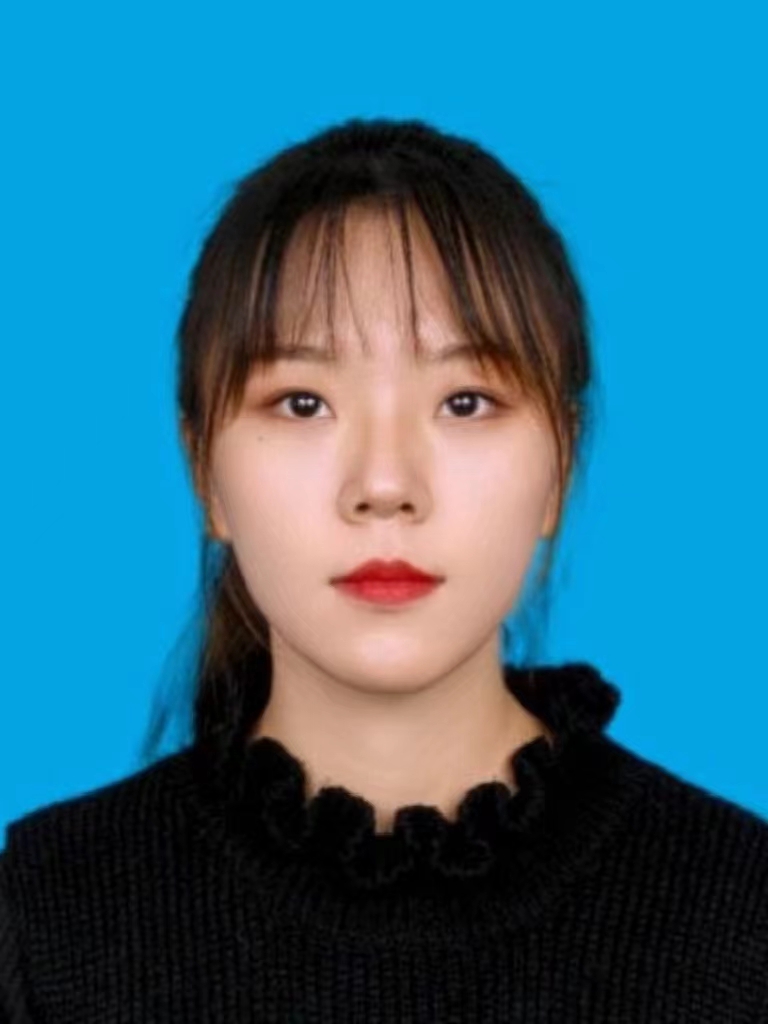}}]{Chenyu Ma} is a master degree candidate with the Management Science and Engineering, Hebei University of Technology. Her current research interests include data mining and machine learning.
\end{IEEEbiography}

\begin{IEEEbiography}[{\includegraphics[width=1in,height=1.25in,clip,keepaspectratio]{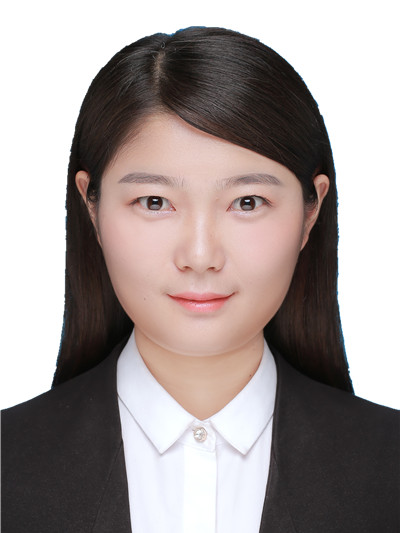}}]{Rong Gao}  received the Ph.D. degree in operational research from Tsinghua University, Beijing, China, in 2017. She is currently a Ph.D. Supervisor and an Associate Professor with the School of Economics and Management, Hebei University of Technology, Tianjin, China. She has authored or coauthored 60 articles on several journals including IEEE Transactions on Fuzzy Systems, Knowledge-Based Systems, Journal of Computational and Applied Mathematics, Fuzzy Optimization and Decision Making, etc. Her current research interests are in decision making, data analysis, and supply chain management under various kinds of uncertainty.

\end{IEEEbiography}

\begin{IEEEbiography}[{\includegraphics[width=1in,height=1.25in,clip,keepaspectratio]{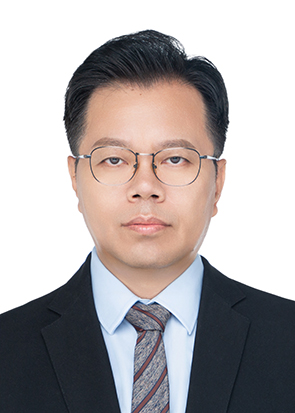}}]{Youxi Wu}received the Ph.D. degree in Theory and New Technology of Electrical Engineering from the Hebei University of Technology, Tianjin, China. He is currently a Ph.D. Supervisor and a Professor with the Hebei University of Technology. He has published more than 30 research papers in some journals, such as IEEE TKDE, IEEE TCYB, ACM TKDD, ACM TMIS, SCIS, INS, JCST, KBS, ESWA, JIS, Neurocomputing, and APIN. He is a distinguished member of CCF and a senior member of IEEE. His current research interests include data mining and machine learning.
\vspace{-1cm}
\end{IEEEbiography}

\begin{IEEEbiography}[{\includegraphics[width=1in,height=1.25in,clip,keepaspectratio]{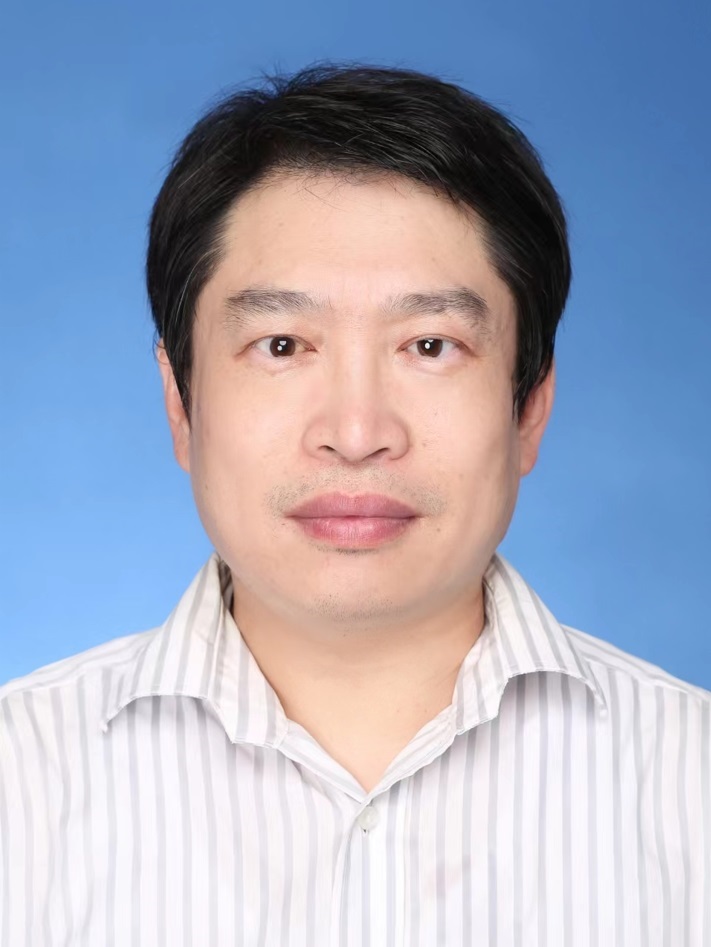}}]{Jinyan Li} is a Distinguished Professor at the Faculty of Computer Science and Control Engineering, Shenzhen Institute of Advanced Technology, Chinese Academy of Sciences. His research areas include bioinformatics, data mining and machine learning. He has published 260 papers at international conferences such as KDD, ICML, AAAI, PODS, ICDM, SDM, and ICDT, and at prestigious journals such as Artificial Intelligence, Machine Learning, TKDE, Bioinformatics, Nucleic Acids Research, Cancer Cell and Nature Communications. He is well known from his pioneering theories and algorithms on “emerging patterns”, a data mining concept for the discovery of contrast growing trends and patterns. Professor Li received a Bachelor degree of applied mathematics from National University of Defense Technology (1991), a Master degree of computer engineering from Hebei University of Technology (1994), and PhD of computer science and software engineering from the University of Melbourne (2001). 
\end{IEEEbiography}

\begin{IEEEbiography}[{\includegraphics[width=1in,height=1.25in,clip,keepaspectratio]{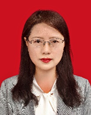}}]{Wenjian Wang}
received the BS degree in computer science from Shanxi University, China, in 1990, the MS degree in computer science from Hebei University of Technology, China, in 1993, and the Ph.D degree in applied mathematics from Xi'an Jiaotong University, China, in 2004. She is currently a full-time professor and a Ph.D supervisor with Shanxi University. She has published more than 260 academic papers. Her research interests include machine learning, data mining, computational intelligence, etc.

\end{IEEEbiography}

\begin{IEEEbiography}[{\includegraphics[width=1in,height=1.25in,clip,keepaspectratio]{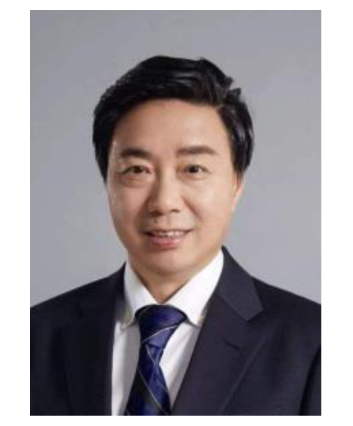}}]{Xindong Wu} (Fellow, IEEE) received the BS and MS degrees in computer science from the Hefei University of Technology, Hefei, China, in 1984 and 1987, respectively, and the PhD degree in artificial intelligence from The University of Edinburgh, Edinburgh, U.K., in 1993. He is currently a senior research scientist with the Research Center for Knowledge Engineering, Zhejiang Lab, China. He is a Foreign Member of the Russian Academy of Engineering, and a Fellow of AAAS. His research interests include data mining, knowledge engineering, Big Data analytics, and marketing intelligence.

\end{IEEEbiography}

\end{document}